\documentclass[10pt]{article} % For LaTeX2e
% \usepackage{tmlr}
% If accepted, instead use the following line for the camera-ready submission:
\usepackage[accepted]{tmlr}
% To de-anonymize and remove mentions to TMLR (for example for posting to preprint servers), instead use the following:
% \usepackage[preprint]{tmlr}

% Optional math commands from https://github.com/goodfeli/dlbook_notation.
% \input{math_commands.tex}

\usepackage{hyperref}
\usepackage{url}

% My packages
% \usepackage{times}
% \usepackage{helvet}
% \usepackage{courier}
\usepackage{makecell}
\usepackage{natbib}
\usepackage{caption}
\usepackage{graphicx}
\usepackage[utf8]{inputenc} % allow utf-8 input
\usepackage{booktabs}       % professional-quality tables
\usepackage{amsfonts}       % blackboard math symbols
\usepackage{nicefrac}       % compact symbols for 1/2, etc.
\usepackage{microtype}      % microtypography
\usepackage{xcolor}         % colors
\usepackage{latexsym}
\usepackage{amssymb}
\usepackage{amsmath}
\usepackage{amsthm}
\usepackage{enumitem}
\usepackage{cleveref}
\usepackage{appendix}
\AtBeginEnvironment{appendices}{\crefalias{section}{appendix}}

\usepackage{diagbox}
\usepackage{bm}
\usepackage{subcaption}
\usepackage{algorithm}
\usepackage{algpseudocode}

\newtoggle{final}
\toggletrue{final}
% \togglefalse{final} % comment this line for final version

\newtoggle{camera-ready}
\toggletrue{camera-ready}
% \togglefalse{camera-ready} % comment this line for camera-ready version

% \iftoggle{final}{}{\usepackage[textsize=tiny]{todonotes}\setlength{\marginparwidth}{2cm}}

% highlighted modifications that will keep showing in the final version
\newcommand{\pw}[1]{\iftoggle{final}{#1}{{\color{orange} #1}}}
\newcommand{\jyp}[1]{\iftoggle{final}{#1}{{\color{blue} #1}}}

\newcommand{\modify}[1]{\iftoggle{camera-ready}{#1}{{\color{blue} #1}}}

% TODOs that will disappear for the final version

%%% Define any theorem-like environments you require here.

\newtheorem{proposition}{Proposition}[section]
\newtheorem{lemma}{Lemma}
\newtheorem{theorem}{Theorem}

\newcommand{\methodname}{{CLAM}} % name of our method
% CLAIM: CLAss-dependent Multiplicative-weights

\title{Understanding and Reducing the Class-Dependent Effects\\of Data Augmentation with A Two-Player Game Approach}

% Authors must not appear in the submitted version. They should be hidden
% as long as the tmlr package is used without the [accepted] or [preprint] options.
% Non-anonymous submissions will be rejected without review.

\author{\name Yunpeng Jiang \email jyp9961@sjtu.edu.cn\\
        \addr Shanghai Jiao Tong University 
      \AND
      \name Yutong Ban$^*$ \email yban@sjtu.edu.cn\\ \addr Shanghai Jiao Tong University
      \AND
      \name Paul Weng\thanks{Corresponding authors.} \email paul.weng@dukekunshan.edu.cn\\ \addr Duke Kunshan University
      }

% The \author macro works with any number of authors. Use \AND 
% to separate the names and addresses of multiple authors.

  % Insert correct month for camera-ready version
 % Insert correct year for camera-ready version
 % Insert correct link to OpenReview for camera-ready version

\begin{document}

\maketitle

\begin{abstract}
Data augmentation is widely applied and has shown its benefits in different machine learning tasks. 
\pw{However, as recently observed, it may have an unfair effect in multi-class classification}. 
While \pw{data augmentation generally improves} the \pw{overall} performance \pw{(and therefore is beneficial for many classes)}, it can actually be detrimental for other classes, which can be problematic in some application domains.
In this paper, to counteract this phenomenon, we propose \methodname, a CLAss-dependent Multiplicative-weights method. 
\pw{To derive it, w}e first formulate the training of a classifier as a \pw{non-linear} optimization problem \pw{that aims at simultaneously maximizing the individual class performances and balancing them.
By rewriting this optimization problem as an adversarial two-player game,} we propose a novel \pw{multiplicative weights algorithm}, for which we prove \pw{the convergence.}
Interestingly, our formulation also reveals that the class-dependent effects of data augmentation is not due to data augmentation only, but is in fact a general phenomenon.
Our empirical results \pw{over \modify{six} datasets} demonstrate that the performance of learned classifiers is indeed more fairly distributed over classes, with only limited impact on the average accuracy.
\end{abstract}

\section{Introduction}
\label{sec:introduction}
Data augmentation is a popular technique used in the field of computer vision to improve both training efficiency and \pw{model performance}\footnote{\pw{For concreteness, we focus on accuracy. 
Our approach could also be applied with other performance measures.}} \citep{DA_CV,DA_DRL}.
It involves creating new training samples by applying semantically-preserving transformations \jyp{\citep{Shorten2019ASO}}, such as random shift \citep{DrQ,drqv2}, 
random convolution \citep{SVEA} and random resized cropping, to the original data.
The aim is to increase the diversity of the training dataset and prevent overfitting, leading to better \pw{generalization} to the testing dataset.
However, recent research \citep{BalestrieroEODA} has highlighted a potential issue with data augmentation:
although it is beneficial in improving the overall performance, data augmentation can lead to disparities in performance across different classes.
This means that while there are significant improvements on the performance of some classes, others may experience a decline.
If certain classes are consistently under-performing, it can lead to biased outcomes and a lack of equitable treatment for all classes \citep{focal_loss}\pw{, which is}
\jyp{important, \pw{for instance, in autonomous driving where good label recognition in all classes (e.g., road signs) is crucial to achieve road safety}.}
As a result, there is growing interest in developing methods to achieve equitable performance among all classes within classification tasks\pw{, notably} to ensure that the benefits of data augmentation are \jyp{more} evenly distributed across all classes, rather than being concentrated in a few. 
This involves finding ways to mitigate the potential negative impact of data augmentation on certain classes while maintaining the positive effects on others. 

\jyp{In this paper, our goal is to address the challenge of reducing the class-dependent effects of data augmentation in classification tasks without \pw{much compromise to} the overall model performance, aiming to limit the degradation on certain classes caused by data augmentation while maintaining a balance between overall accuracy and class equity.}
We relate this problem to fair optimization \pw{\citep{OgryczakLussPioroNaceTomaszewski14}}, \jyp{which seeks \pw{to obtain good solutions with a balanced profile in the context of multi-objective optimization.
It has notably been applied in resource allocation problems to ensure a fair treatment of the multiple involved parties.}}
\jyp{
In our \pw{work, we apply fair optimization} to the realm of classification tasks with data augmentation.
As illustrated in \Cref{fig:teaser}, our fair optimization problem is formulated as an adversarial zero-sum two-player game. 
The max player aims at maximizing the overall accuracy of the model while the min player focuses on ensuring fairness over all classes. 
}
Instead of the typical max-min fairness formulation, we propose a novel refinement of it to ensure that the accuracies of all classes are optimized, not only that of the worst performing class.
% \jyp{I think the typical formulation here is about max-min fairness.}
In this strategic interaction, the min player determines the weight for each class\pw{, assigning larger weights to worse-performing classes}, while the max player \pw{uses these weights and} optimizes the model so that the \jyp{weighted overall accuracy} is maximized.
\jyp{Unlike approaches \pw{based on maxmin fairness}
that concentrate solely on improving the accuracy of the worst-performing class, our strategy ensures that the accuracies of all classes are enhanced, leading to a more balanced distribution of class accuracies.}
\pw{As a side note, our proposed approach actually also extends the popular fair optimization formulation based on the Generalized Gini social welfare function (GGF) \citep{WEYMARK1981409giniindex}, which achieves equity by assigning larger weights to lower-ranked objectives.
An important difficulty when using GGF in practice is deciding how to set its fixed weights, which our proposition avoids.}

\begin{figure}[tb]
    \centering
    % backdoor
    \includegraphics[width=0.7\linewidth]{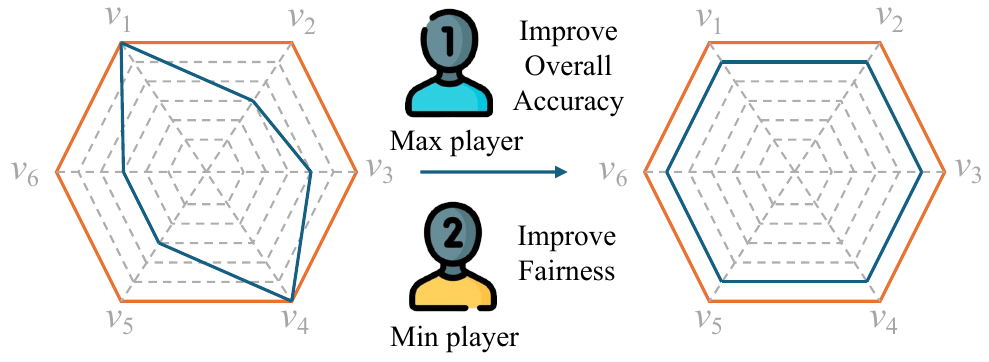}
    \caption{\methodname\ formulates the classification problem as an adversarial two-player game, where $v_i$ represents the accuracy of the $i^{th}$ class in the classification problem.
    The min player determines the weights for classes, while the max player aims to maximize the weighted accuracy of the model.} 
    \label{fig:teaser}
\end{figure}

\pw{To tackle this two-player game, we} propose employing an adaptation of the multiplicative weights \pw{algorithm} \jyp{\citep{Freund1999AdaptiveGP}}.
Our method dynamically adjusts the weights assigned to each class to ensure that more focus is given to classes with weaker performance.
By adopting this multiplicative weights method, we aim to achieve a state of equilibrium where the model's overall accuracy is optimized while simultaneously ensuring fair treatment among all classes.
\jyp{Traditional classification frameworks commonly prioritize overall accuracy while neglecting individual class disparities.
By contrast, our innovative method introduces a mechanism that encourages the min player to consider the max player's objective, leading to a trade-off between overall accuracy and class equity.}

% contributions
\paragraph{Contributions} 
Our main contributions are three-fold:
(i) We propose to reformulate classification using a novel variant of fair optimization to \jyp{reduce disparities among different classes}, which can be viewed as an adversarial zero-sum two-player game.
% ensure a balanced distribution of performance over classes  
This formalization highlights that the class-dependent effects of data augmentation is actually a more general phenomenon and not only due to data augmentation, which we have also experimentally confirmed.
(ii) We propose a multiplicative weights optimization method to solve this two-player game and further analyze the upper bound of the average loss, which proves the convergence of our method and \jyp{gives an upper bound on the convergence rate}.
(iii) Experimental results show that our method mitigates the class-dependent effects of data augmentation, reduces the increase of these effects with stronger data augmentation, enhances the worst class accuracies, and meanwhile maintains a minimal impact on overall average accuracy.

% TL;DR: We reformulate classification using a novel variant of fair optimization and propose a multiplicative weight optimization method to reduce the class-dependent effects of data augmentation. 

%%%%%%%%%%%%%%%%%%%%%%%%%%%%%%%%%%%%%%%%%%%%%%%%%%%%%%%%%%%%%%%%%%%%%%%%

\section{Related Work}
\jyp{In this section, we introduce related works \pw{in} fair optimization, fair machine learning\pw{,} and class equity.
Fair optimization integrates fairness-related objectives into the optimization process.
In contrast, fair machine learning focuses on developing algorithms that do not discriminate against any particular group based on sensitive attributes. 
Most relevant to our work, class equity addresses the fairness of a classification model's predictions across all classes.}

% "Generalized Gini inequality indices" 
% Learning Fair Policies in Multiobjective Deep RL with Average and Discounted Rewards
\paragraph{Fair Optimization}
In the context of fair optimization \pw{\citep{OgryczakLussPioroNaceTomaszewski14}}, the Generalized Gini social welfare function (GGF) \citep{WEYMARK1981409giniindex}, which includes the standard max-min fairness \citep{Rawls71}, plays a pivotal role in multi-objective settings.
% GGF is leveraged to assign appropriate weights to various components in the loss function.
The application of GGF for assigning weights is to ensure that the learned model is more equitable and treats all objectives fairly.
For instance, in more recent applications in machine learning, \citet{pmlr-v119-siddique20a} utilize GGF as a means to guarantee fairness over users in deep reinforcement learning.  
With the help of GGF, it becomes possible to reduce bias towards any single objective, thus fostering a fairer learning process.
More generally, fair social welfare functions have been advocated \citep{HeidariFerrariGummadiKrause18,SpeicherHeidariGrgicHlacaGummadiSinglaWellerZafar18,Weng19} as a theoretically-founded approach to tackle fairness in machine learning.

% Paragraph about group fairness
% discuss how group fairness is different from class fairness
\paragraph{Fair Machine Learning}
In fair machine learning, previous works consider different definitions of fairness, such as individual fairness and group fairness.
% \TODO{Explaining why those works are not really related to yours should be at the end of the paragraph.
% A paragraph in the related work should be like this: (1) explain the direction, what previous works have achieved, their limitations and/or how they are different from your work.}
% individiual fairness
Individual fairness focuses on the uniformity of predictions for subjects who possess comparable features relevant to the decision-making process.
% group fairness
In group fairness \citep{krishnaswamy2021fairallbesteffortfairness, DBLP:journals/corr/abs-2111-14581, alghamdi2022adultcompasfairnessmulticlass, 9879152, anchlia2023a}, the focus is on ensuring that the prediction outcomes for different groups are balanced or equal.
In binary classification problems, an algorithm learns a model to predict a target variable and there is a sensitive variable that defines the groups within which to measure the fairness.
% $Y$, and $R$ represents the predicted probability that a classifier yield. 
Group fairness considers three main criteria, independence, separation and sufficiency \citep{fairness_survey}.

% discuss methods that uses adversarial two player game, which are related to our method
In fair machine learning field, some existing works leverage adversarial two-player game strategy.
\citet{9006322} have trained generative adversarial networks (GAN) designed to concurrently achieve equitable data generation and classification. 
This is accomplished by co-training the generative model and the classifier through joint adversarial games with discriminators.
\citet{martinez2020minimaxparetofairnessmulti} concentrate on group fairness concerning disparities in predictive risk. 
They conceptualize fairness as a minimax problem, aiming to discover the classifier that offers the smallest maximum group risk among all efficient models. 
To address this, they propose the Approximate Projection onto Star Sets (APStar) algorithm, which iteratively refines the minimax risk by updating a linear weighting vector.
Closest to our work, \citet{agarwal2018reductionsapproachfairclassification} reduce fair classification to a series of cost-sensitive classification challenges. 
They reformulate the issue as an adversarial two-player game and resolve it by optimizing the Lagrangian. 
Even though the above methods use the max-min two player game strategy, our problem and the formulations, detailed in \Cref{sec: methodology}, are \modify{distinct.  

More generally, the objective of our fair classification problem is different to these fair machine learning methods.}
Our emphasis is on obtaining a classifier whose performance is fair with respect to classes.
In our context, the sensitive variable is the same as the target.
Consequently, the three main criteria, independence, separation, and sufficiency, no longer apply in our scenario.
Moreover, our focus is on reducing the class-dependent effects of data augmentation across a significantly larger number of classes rather than achieving fairness among a constrained number of groups.

\paragraph{Class Equity}
% Papers that achieve fairness by loss functions
Specialized loss functions have been introduced to achieve class equity. 
\modify{These approaches are general algorithmic fairness methods for class equity, which can be applied independently of data augmentation.}
Notably, \citet{focal_loss} introduce the focal loss for the task of dense object detection, which strategically decreases the training loss from correctly classified instances, thus concentrating the model's learning efforts on the challenging samples.
This approach aims to achieve instance fairness, which can also be leveraged in our context for class equity.
\citet{tilted_cross_entropy} propose tilted cross entropy (TCE) loss, which fosters fairness in semantic segmentation tasks by dynamically assigning weights to various classes based on the training losses.
Addressing the problem of model pruning, \citet{pw_loss} recognize that while pruning might minimally impact overall performance, it can inadvertently introduce biases, leading to performance degradation for certain sample subsets.
Inspired by focal loss, the authors propose a performance weighted loss function aimed at mitigating such unfairness. 
\citet{bitterwolf2022classifiers} extend evaluation metrics to measure the performance on a group of worst classes, leading to a natural approach that applies the max-min formulation to achieve class equity. 
\modify{In our work, we reformulate classification using fair optimization. 
Similarly to these methods, our approach can reduce the class-dependent effects of data augmentation while remaining generic and applicable regardless of data augmentation.}
% Papers that solve the problem by relabelling 

\modify{Another line of work considers 
label noise \citep{scott2020learninglabelproportionsmutual} as a possible cause of class disparities.
Thus,} one strategy to mitigate the unfairness of per-class accuracy is through the process of relabelling \citep{kirichenko2023understandingdetrimentalclassleveleffects}.
\pw{Aiming at reducing the class-dependent effects of data augmentation, like in our work,}
they demonstrate that by leveraging the more precise multi-label annotations available in ImageNet, the typically negative impact of data augmentation on individual class accuracies is substantially reduced.
While their work emphasizes the enhancement of label quality, we directly aim at reducing the class-dependent effects of data augmentation by reformulating the maximization of the average accuracy (see \Cref{eq:traditional formulation}) into a max-min problem (see \Cref{eq:our formalization}) encoding fairness over classes.
Our approach is complementary to their method and can potentially be combined with theirs.

%%%%%%%%%%%%%%%%%%%%%%%%%%%%%%%%%%%%%%%%%%%%%%%%%%%%%%%%%%%%%%%%%%%%%%%%

\section{Background}
\jyp{In this section, we provide a detailed introduction to the generalized Gini social welfare function and the multiplicative weights method, both of which are closely related to our work.}
\pw{Before that, we} first introduce a few notations, \jyp{which will be useful in the following weighting schemes}: $\Delta_n = \pw{\{\bm w \in [0, 1]^n \mid \forall i, w_i \ge 0; \sum_i w_i = 1\}}$ denotes the $(n-1)$-simplex which is the set of all non-negative $n$-dimensional vectors $\bm{w}$ that sums to one. 
This simplex is the space to which \pw{the weight vectors used in our proposed two-player game} belongs.
\jyp{For vectors $\bm{v}, \bm{v'} \in \mathbb R^n$, the notation $\bm{v} \le \bm{v'}$ means that each element of $\bm{v}$ is less than or equal to the corresponding element in $\bm{v}$.}
\jyp{Denote the symmetric group of order $n$ \pw{(i.e., set of all permutations of $n$ elements)} with $\mathbb S_n$ and let $\bm{w_s} = \pw{(w_{s(1)}, w_{s(2)}, \ldots, w_{s(n)})}$ be a reordering of the weight vector $\bm{w}$ \pw{according to the permutation} $s$.}

\paragraph{Generalized Gini Social Welfare Function (GGF)}
%\label{sec:GGF}
GGF \citep{WEYMARK1981409giniindex} extends the concept of Gini coefficient, often used to measure income equality, to a measure of social welfare \pw{that encodes} fairness:
\begin{equation}
GGF_{\bm w}(\bm u) = \sum_j w_j u_{s(j)},
\label{eq:GGF}
\end{equation}
where $\bm w \in \Delta_n$ such that $w_1 > w_2 > ... > w_n$, $\bm u \in \pw{\mathbb R^n}$ is a utility vector, and $s \in \mathbb S_n$ is a permutation such that $u_{s(1)} \leq u_{s(2)} \leq ... \leq u_{s(n)}$. 
As can be seen from its definition, GGF encodes fairness by assigning higher weights to worse-off components.
With different settings of $\bm w$, a higher GGF value indicates either higher overall utility or more equitable distribution.
Interestingly, by setting $\bm w$ close to $(1, 0, \ldots, 0)$, GGF can encode the classic max-min fairness.
One difficulty with exploiting GGF in applications is that it is not always clear how to choose this weight vector $\bm w$, which actually controls a trade-off between efficiency and equity.

\paragraph{Multiplicative Weights Method}
This method \citep{Freund1999AdaptiveGP}, also known as \textit{hedge}, is a learning algorithm, which can be applied by a player in adversarial two-player games. 
It updates the probabilities of the player's strategy based on the outcomes of previous timesteps.
Without loss of generality, we assume that the player is the min player.
The player initializes a mixed strategy, i.e., probability distribution $\bm w \in \Delta_n$ over all her strategies (e.g., with a uniform distribution). 
Then at each timestep $t$, the player chooses a strategy $i$ sampled according to her current mixed strategy $\bm w^t$ and incurs loss $v_i$ according to the other player's choice and the payoff matrix.
The player then updates her mixed strategy using the multiplicative weights method. 
The probability of strategy $i$ is updated as:
\begin{equation}
    w_i^{t+1} \leftarrow \frac{w_i^{t}\cdot e^{-\tau\cdot v_i}}{Z_t} \mbox{ ~ with ~ } 
    Z_t = \sum_{i=1}^{n}w_i^t \cdot e^{-\tau\cdot v_i},
\end{equation}
where $\tau>0$ is a hyperparameter.
This process is repeated for multiple rounds until convergence. 
The hyperparameter $\tau$ controls the update rate of the probabilities. 
\citet{Freund1999AdaptiveGP} have proved that with an appropriate choice of $\tau$, the mixed strategy is guaranteed to asymptotically suffer an average loss close to the minimum loss achievable by any fixed strategy. 

%%%%%%%%%%%%%%%%%%%%%%%%%%%%%%%%%%%%%%%%%%%%%%%%%%%%%%%%%%%%%%%%%%%%%%%%

\section{Methodology}
\label{sec: methodology}
In this section, we introduce our proposed approach, CLAss-dependent Multiplicative-weights (\textit{\methodname}), where the solution should satisfy two generally contradictory objectives.
% it should achieve a high overall performance 
Firstly, the average accuracy over all classes should be maximized. 
% it should achieve fairness
Secondly, the classifier should not favor any particular class over others.
% theoretical algo; theoretical analysis; practical implementation
For this purpose, we formulate the classification problem as an adversarial two-player game, which motivates our adaptation of the multiplicative weights method. 
We then provide a theoretical proof of its convergence to an optimal fair classifier.
\jyp{Finally, we conclude this section with the practical implementation of our adapted multiplicative weights method.}

\subsection{Classification as An Adversarial Two-Player Game}
\label{subsec: Classification as An Adversarial Two-Player Game}
To motivate our formulation of the classification problem (see \Cref{eq:our formalization with constraint}), we first recall the standard classification problem.
Traditionally, an $n$-class classification problem is solved by simply maximizing the average performance over all classes:
\begin{equation}
\label{eq:traditional formulation}
    \max_\theta \frac{1}{n}\sum_{i=1}^n v_i(\theta),
\end{equation}
where $\theta$ is the parameter of the classification model (e.g., a neural network) and $v_i(\theta)$ is the accuracy achieved by the model for the $i^{th}$ class.
As usual in deep learning, this optimization problem is solved approximately by training the model (e.g., minimizing an empirical risk expressed with a convex loss such as the cross-entropy loss).
Training is usually performed with the assistance of data augmentation, which can significantly improve the sample efficiency.

However, this standard approach often results in a skewed performance distribution over classes, favoring some over others.
To counteract this imbalance, a naive approach would be to resort to the standard minmax fairness\footnote{Here, this leads to a max-min formulation since we want to maximize performance.}, which can be interpreted as a zero-sum two-player game.
This alternative perspective shifts the focus from the average performance to the performance of the worse class\footnote{This is equivalent to $\max_\theta \min_{i} v_{i}(\theta)$.}: 
\begin{equation}
\max_\theta \min_{\bm{w} \in \Delta_n} \sum_{i} w_i v_{i}(\theta) \,.
\label{eq:our formalization}
\end{equation}
This formulation aims to enhance the worst-class performance by assigning them greater weights.
This two-player game can be solved \pw{using the multiplicative weights algorithm:}
At the end of training epoch $t$, the min player chooses $\bm w$, a probability distribution over classes.
With the knowledge of $\bm{w}$, the max player computes her best response: she optimizes the weighted objective function $\sum_{i} w_i v_{i}(\theta)$ by updating $\theta$.
After the model parameter update at epoch $t+1$, the loss of the min player is computed by the class accuracy weighted by $\bm{w}$.
Due to the nature of the zero-sum game, the loss of the max player is the negative of the loss incurred by the min player.

Nonetheless, this max-min approach \pw{is insufficient}.
\pw{Indeed, the min player would assign the weights $\bm{w}$ such that the worst class receives a weight of $1$ while all the others receive $0$.} 
By focusing solely on the worst class, the performance of the other classes is neglected \pw{when training the classifier (i.e., max player)}, leading to sub-optimal outcomes of the classification model as a whole. 
\pw{As a result, another class would achieve the worst performance and at the next iteration, the weight assigned by the min player would shift to it.
This oscillating phenomenon is observed in our experiments:}
the accuracy fluctuates and the training does \pw{not} converge.

To avoid this extreme case, we instead use a constrained max-min formulation:
\begin{equation}
\max_\theta \min_{\bm{w} \in \Delta_n^c} \sum_{i} w_i v_{i}(\theta),
\label{eq:our formalization with constraint}
\end{equation}
where $\Delta^c_n$ is a \modify{compact convex set, which is a strict subset of $\Delta_n$ such that $ u_{\min} = \min_i \min_{\bm{w} \in \Delta^c_n} w_i > 0$ (excluding the extreme points of $\Delta_n$).
We assume that any mixed strategy of the max player can be represented by $\theta$.
Under this assumption, the max-min problem is equivalent to the min-max problem by the Minimax theorem.}
% where $\Delta^c_n$ is a strict subset of $\Delta_n$ such that $\bm{u} = \arg\min_{\bm{w} \in \Delta^c_n} \sum_{i} w_i v_{i}(\theta)$ and $\min_i \bm{u}_i > u_{\min}$ (excluding the extreme points of $\Delta_n$).
Depending on applications, different $\Delta^c_n$ could possibly be defined.
\pw{In contrast to choosing fixed weights as in GGF, we expect that its definition may be easier since one may define $\Delta^c_n$ such that whole ranges of GGF weights are allowed.}

\pw{Set} $\Delta^c_n$ needs to satisfy some properties such that \pw{$f(v) = \min_{\bm{w} \in \Delta_n^c} \sum_{i} w_i v_{i}$} models class equity\pw{, as stated by the following proposition whose proof can be found in \Cref{appendix: proof of proposition}:}
\begin{proposition}
\label{proposition4.1}
If $\Delta^c_n$ satisfies the following property:
$\forall \bm{w} \in \Delta^c_n, \forall s \in \mathbb S_n, \bm{w_s} \in \Delta^c_n$, 
then $f$ is 
\begin{itemize}
    \item monotonic: $\forall \bm{v}, \bm{v'}, \bm{v} \le \bm{v'} \Rightarrow f(\bm{v}) \le f(\bm{v'})$, 
    \item Schur-concave: $\forall \bm{v}, v_i > v_j$, $\forall \epsilon>0,  \epsilon < v_i-v_j$, $f(\bm{v} + \epsilon \bm{e_j} - \epsilon \bm{e_i}) \ge f(\bm{v})$,
    \item symmetric: $f(\bm{v}) = f(\bm{v_s})$,
\end{itemize}
where $\bm{e_j}$ represents a vector that is 1 at the $j^{th}$ element and 0 otherwise.
\end{proposition}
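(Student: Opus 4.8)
The plan is to establish the three properties in the order monotonicity, symmetry, then Schur-concavity, exploiting throughout that $f(\bm v) = \min_{\bm w \in \Delta_n^c} \sum_i w_i v_i$ is a pointwise minimum over the compact set $\Delta_n^c$ of the affine maps $\bm v \mapsto \sum_i w_i v_i$, and that this minimum is attained. Two facts will be used repeatedly: first, each feasible $\bm w \in \Delta_n^c \subseteq \Delta_n$ is nonnegative; second, as a minimum of affine functions of $\bm v$, the map $f$ is concave in $\bm v$.

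For monotonicity I would let $\bm w^\star$ be a minimizer attaining $f(\bm v')$. Since $\bm w^\star \ge 0$ and $\bm v \le \bm v'$ componentwise, we have $\sum_i w_i^\star v_i \le \sum_i w_i^\star v_i'$. Chaining this with $f(\bm v) \le \sum_i w_i^\star v_i$ (as $f(\bm v)$ is the minimum over all feasible weights) yields $f(\bm v) \le \sum_i w_i^\star v_i \le \sum_i w_i^\star v_i' = f(\bm v')$. This is the easy part.

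For symmetry, the key is the permutation-closure hypothesis on $\Delta_n^c$. Given $s \in \mathbb S_n$, I would reindex the inner sum via the substitution $j = s(i)$ to rewrite $f(\bm v_s) = \min_{\bm w \in \Delta_n^c} \sum_i w_i v_{s(i)} = \min_{\bm w \in \Delta_n^c} \sum_j w_{s^{-1}(j)} v_j$, that is $f(\bm v_s) = \min_{\bm w \in \Delta_n^c} \sum_j (\bm w_{s^{-1}})_j\, v_j$. Because $\Delta_n^c$ is invariant under permutations, the map $\bm w \mapsto \bm w_{s^{-1}}$ is a bijection of $\Delta_n^c$ onto itself, so ranging $\bm w$ over $\Delta_n^c$ is the same as ranging $\bm w_{s^{-1}}$ over $\Delta_n^c$; hence the minimum equals $f(\bm v)$.

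For Schur-concavity, the main idea is that a symmetric concave function is Schur-concave, which I realize concretely through a single Pigou--Dalton transfer. Writing $\sigma = (i\,j)$ for the transposition swapping coordinates $i$ and $j$, I would express the transferred vector $\bm v + \epsilon \bm{e_j} - \epsilon \bm{e_i}$ as the convex combination $(1-\lambda)\bm v + \lambda \bm v_\sigma$ with $\lambda = \epsilon/(v_i - v_j)$, which lies in $(0,1)$ precisely because $0 < \epsilon < v_i - v_j$; all coordinates other than $i,j$ are unaffected, and a direct check confirms coordinates $i$ and $j$ become $v_i - \epsilon$ and $v_j + \epsilon$. Concavity of $f$ then gives $f(\bm v + \epsilon \bm{e_j} - \epsilon \bm{e_i}) \ge (1-\lambda) f(\bm v) + \lambda f(\bm v_\sigma)$, and symmetry gives $f(\bm v_\sigma) = f(\bm v)$, so the right-hand side collapses to $f(\bm v)$, establishing the claim. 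The main obstacle is the symmetry step: it is the only place where the permutation-closure hypothesis is genuinely used, and one must track the inverse permutation carefully in the reindexing so that the feasible set maps onto itself; by contrast, concavity is the standard fact that a minimum of affine functions is concave, and monotonicity is immediate.
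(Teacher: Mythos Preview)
Your proof is correct. Monotonicity and symmetry are handled exactly as in the paper (your symmetry argument is in fact more explicit than the paper's one-line remark). The genuine difference is in the Schur-concavity step: the paper argues directly by taking a minimizer $\bm y$ of $\bm w \mapsto \bm w \cdot (\bm v + \epsilon \bm{e_j} - \epsilon \bm{e_i})$, splitting into the two cases $0 < \epsilon \le \tfrac{1}{2}(v_i - v_j)$ and $\tfrac{1}{2}(v_i - v_j) \le \epsilon < v_i - v_j$, and in the first case observing that $y_j \ge y_i$ (else swapping $y_i,y_j$ would lower the objective, using permutation closure), which yields $\epsilon(y_j - y_i) + \bm y \cdot \bm v \ge \bm y \cdot \bm v \ge f(\bm v)$; the second case is reduced to the first by symmetry via $\epsilon' = v_i - v_j - \epsilon$. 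You instead invoke the general principle that a symmetric concave function is Schur-concave, realizing the Pigou--Dalton transfer as the convex combination $(1-\lambda)\bm v + \lambda \bm v_\sigma$ with $\lambda = \epsilon/(v_i - v_j)$ and applying concavity plus symmetry. Your route is cleaner---no case split, no analysis of the minimizer's structure---and makes the conceptual reason for Schur-concavity transparent; the paper's route is slightly more self-contained in that it never names concavity, at the cost of the extra casework.
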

\pw{The mononicity property ensures that all the components should be maximized.}
\jyp{Schur-concavity states that if we take any two components $v_i$ and $v_j$ of the vector $\bm{v}$ such that $v_i > v_j$, and we increase $v_j$ by a small amount while decreasing $v_i$ by the same amount $\epsilon$, the value of $f$ will not decrease, which is consistent with the idea that redistributing resources from a more advantaged component to a less advantaged component should not result in a worse outcome.
Symmetry states that the value of the function $f$ remains unchanged regardless of how the components of the vector $\bm{v}$ are permuted, which ensures that the final outcome does not favor any particular component of the vector $\bm{v}$.
In other words, the function $f$ treats all components of $\bm{v}$ equally, which is a fundamental aspect of fairness.}

\pw{The refined formulation in \Cref{eq:our formalization with constraint}} ensures that the maximization of the worst-class performance occurs with a controlled set of weights. 
Then the efforts to enhance the accuracies of the worst classes may not come at the cost of substantial reductions in the performance of other classes. 
This offers a more balanced approach to improving the worst-class performance while minimizing the impact on the other classes.
This constrained max-min formulation defines our fair optimization problem.
\jyp{Interestingly, when contrasting the tradition\pw{al} formulation (\Cref{eq:traditional formulation}) with our proposed one (\Cref{eq:our formalization with constraint}), it becomes evident that this issue of class disparities is not specifically related to data augmentation but rather a general problem.
While data augmentation can exacerbate this effect, the underlying issue still exists even without it.
Also, \pw{as another consequence of our formulation,} this class-dependent disparity in performance \pw{would occur} regardless of the network architecture used.}

Before explaining how this \pw{refined} problem can be solved by adapting the multiplicative weights method, we further discuss how it is related to fair optimization as modeled by the GGF\pw{, which may also help better understand our novel formulation}.
This refined problem shares a close relationship with the GGF (\Cref{eq:GGF}).
By maximizing GGF during training, the model is not only optimized for higher overall accuracy but also for fairness among classes, leading to an increase of the accuracies of the worst classes.
Interestingly, GGF can be rewritten as follows:
\begin{equation}
GGF_{\bm{w}}(\bm v(\theta)) = \min_{s \in \mathbb S_n}\sum_i w_{s(i)} v_{i}(\theta),
\label{eq:GGF loss min}
\end{equation}
where \jyp{$w_{s(i)}$\modify{'s are reordered based on $s$.
The min over $s$ ensures that at the end, higher $w_{s(i)}$'s are assigned to lower $v_i(\theta)$}'s.}
Using this rewriting, optimizing GGF corresponds to the following max-min problem:
\begin{equation}
\max_\theta \min_{s \in \mathbb S_n}\sum_i w_{s(i)} v_{i}(\theta)
\label{eq:GGF formalization}
\end{equation}
This problem formulation indicates that when we train the model to optimize the GGF-based objective function, the min player chooses a permutation $s$, while the max player tune the weights $\theta$ to maximize the weighted sum defined by the min player. 
Therefore, our fair optimization problem defined in \Cref{eq:our formalization with constraint} includes the GGF-based formulation by setting the constrained simplex to only contain weights that corresponds to reorderings of \jyp{GGF} weights $\bm{w}$.
However, our generic fair optimization problem is more flexible since it does not require to define precisely weight vector $\bm{w}$, which is hard to choose in practice.

\subsection{Fair Optimization via Multiplicative Weights Algorithm}
\begin{figure}[tb]
    \centering
    \includegraphics[width=.7\linewidth]{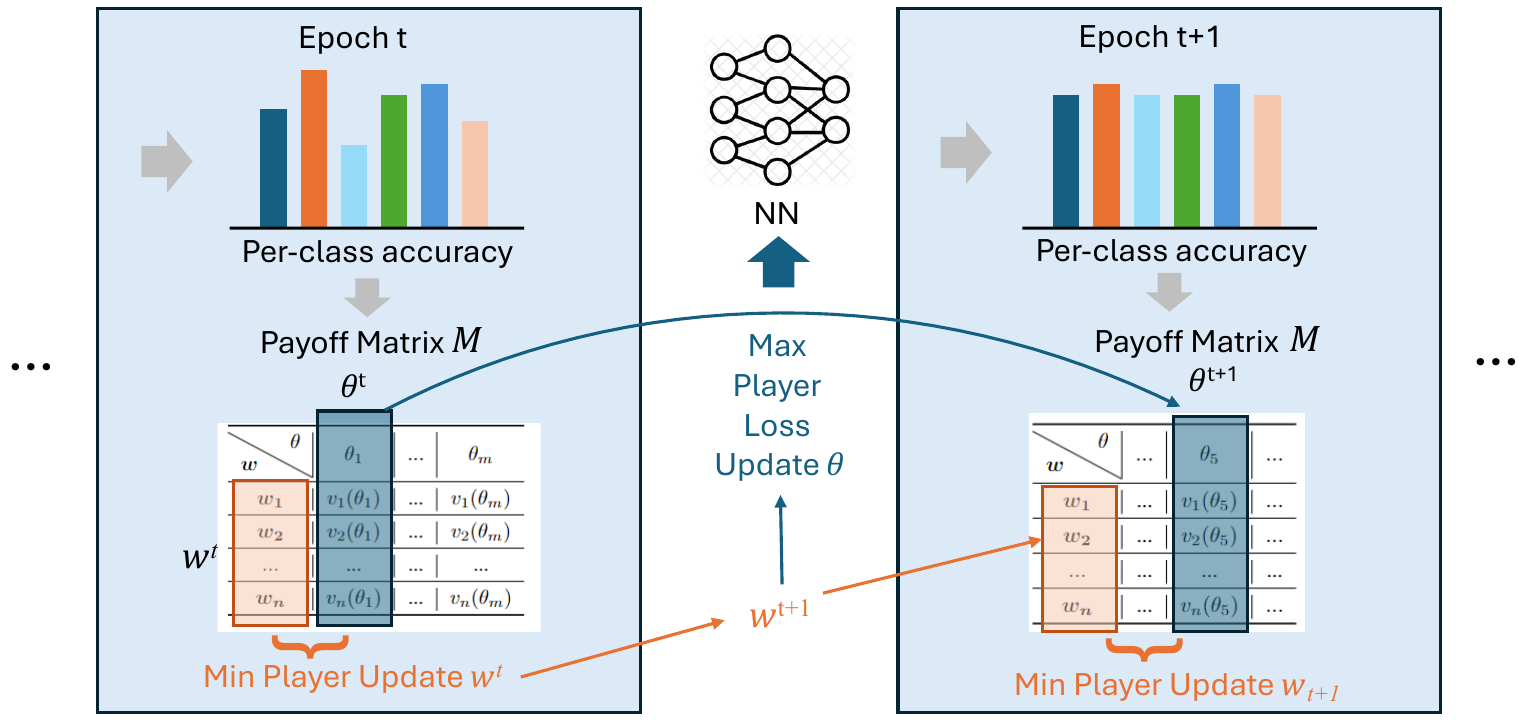}
    \caption{Our method reduces the class-dependent effects of data augmentation iteratively over epochs. At epoch $t$, the max player updates the model parameters $\theta^t$ by a loss function that incorporates weights $\bm{w_t}$. Subsequently, the min player adjusts the $\bm{w_t}$ through a multiplicative weights algorithm based on the accuracy of each class, $v_i(\theta^t)$, to obtain the weights for the next epoch $\bm{w_{t+1}}$. Consequently, the max player loss of epoch $t+1$ is calculated with the updated weights $\bm{w_{t+1}}$. The above process repeats for each epoch. For the payoff matrix $M$, each row corresponds to a class $i$, each column represents a set of model parameters $\theta$, and each entry $v_i(\theta)$ indicates the accuracy achieved by the model with parameters $\theta$ when evaluated on class $i$.} 
    \label{fig:our method}
\end{figure}

As defined above, the min player has a mixed strategy $\bm w$ over classes while the max player maximizes the sum of weighted accuracies by optimizing $\theta$.
This adversarial two-player game is well studied. 
The min player can minimize its loss by applying the multiplicative weights method to attach significance to the worst classes. 
The max player can directly apply the same weights of all classes \jyp{$\bm{w^t}$} as the min player, which ensures that the accuracies of the classes \jyp{$\bm{v^t}$} emphasized by the min player are improved most in the next training epoch \jyp{$t+1$}.
In \Cref{eq:our formalization with constraint}, \jyp{we introduce a constrained simplex $\Delta_n^c$.}
This modification introduces some consequences on the theoretical guarantee, which will be discussed in \Cref{sec: theroem}.
The overview of our method is shown in \Cref{fig:our method}.
%For simplicity and concreteness, in \Cref{eq:our update rule} we use $\pi$ to denote a projection on $\Delta_n^c$. 
According to this modification, our min-player update rule can be transformed into the following:
\begin{equation}
    \label{eq:our update rule}
    \begin{aligned}
        &\bm{u^t} \leftarrow \bm{w^t}{\times}e^{-\tau\cdot \bm{v^t}}\\
        &\bm{w^{t+1}} \leftarrow \modify{\pi(\frac{\bm{u^t}}{\sum_i u^t_i})},
    \end{aligned}
\end{equation}  
where $\tau>0$ is a hyperparameter and %$\pi(\bm{u^t})_i = \max(u_i^t, u_{\min})$.
$\pi$ to \modify{denote the Euclidean projection on $\Delta_n^c$.}

\begin{algorithm}[t]
\caption{Adapted Multiplicative Weights Algorithm}
\label{alg:theoretical}
\textbf{Hyperparameters}: total number of training epochs $T$, the restricted simplex $\Delta_n^c$.
\begin{algorithmic}[1] %[1] enables line numbers
\State Initialize model parameters $\theta$, a weight vector $\bm{w^0}$.
\For{$t=0 \dots T$}
    \State // update $\theta$ (max player)
    \State Update $\theta$ such that $\theta = \underset{\theta}{\arg\max} \sum_{i} w^t_i v_{i}(\theta)$.
    \State // update $\bm{w^t}$ (min player)
    \State Compute the \modify{accuracies of all classes $\bm{v^t}$ at iteration} $t$.
    \State Update $\bm{w^t}$ \modify{according to \Cref{eq:our update rule}.} 
    %such that $\bm{w^t}=\arg\min_{\bm{w} \in \Delta_n^c} \sum_{i} w_i v^t_{i}(\theta)$.
\EndFor
\end{algorithmic}
\end{algorithm}

\jyp{
The pseudocode provided in \Cref{alg:theoretical} outlines the adapted multiplicative weights algorithm. 
Compared to the standard multiplicative weights algorithm, the key difference lies in the update of the weight vector $\bm{w^t}$ in line 7 of \Cref{alg:theoretical}. 
\modify{After training the classifier, the accuracies} for all classes $\bm{v^t}$ are used to update $\bm{w^t}$. 
It is important to note that $\bm{w^t}$ is confined to a restricted simplex $\Delta_n^c$ rather than the entire simplex $\Delta_n$.
}

\subsection{Theoretical Proof of the Convergence}
\label{sec: theroem}
In this section, we prove that the average loss suffered by the min player in \Cref{alg:theoretical} is approximately equal to the minimum loss achievable by any fixed strategy, which shows the convergence of \modify{\Cref{alg:theoretical}}.

\begin{theorem}
\label{theorem:average loss}
The average loss of the min player's strategy compared to the optimal fixed strategy has an upper bound if we set $\tau$ as $\ln(1+\frac{1}{1+\underset{t}{\max}{\alpha_t}}\sqrt{\frac{\ln n}{T}})$.
\begin{equation}  
    \frac{1}{T}\sum_{t=1}^{t=T}V_{\bm{w^t},\bm{v^t}} \leq \frac{1}{T}\min_{\bm{\Tilde{w}}\in \Delta_n^c}\sum_{t=1}^{t=T} V_{\bm{\Tilde{w}},\bm{v^t}} + \frac{\ln n}{T} + (1 + \max_t \alpha_t)\cdot \sqrt{\frac{\ln n}{T}},
\end{equation}
where $T$ is the number of total epochs, $\bm{\Tilde{w}}$ can be any arbitrary weight vector in $\Delta_n^c$, 
$\bm{w^t}$ is the weight vector chosen by our method at epoch $t$, $\bm{v^t}$ is the accuracy vector for all classes at epoch $t$, $V_{\bm{w},\bm{v}} = \sum_{i=1}^n w_{i}v_{i}$ is the weighted accuracy of choosing $\bm{w}$ when the accuracy vector is $\bm{v}$ and can also be regarded as the loss for the min player, $\alpha_t = \frac{\sum_{i, \epsilon_i^t>0}{\Tilde{w_i}v_i^t}}{\sum_{i}{\Tilde{w_i}v_i^t}} = \frac{\sum_{i, \epsilon_i^t>0}{\Tilde{w_i}v_i^t}}{V_{\bm{\Tilde{w}}, \bm{v}}}$, $\bm{x^{t+1}}=\frac{\bm{u^t}}{\sum_i u^t_i}$, and
$\epsilon_i^t = \frac{\pi(\bm{x^{t+1}})_i}{x^{t+1}_i} - 1$.
\end{theorem}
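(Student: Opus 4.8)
The plan is to adapt the classical regret analysis of the multiplicative weights (hedge) algorithm from \citet{Freund1999AdaptiveGP} to account for the Euclidean projection $\pi$ onto the restricted simplex $\Delta_n^c$. The standard argument proceeds by tracking the potential $Z_t = \sum_i u_i^t$ and bounding $\ln(Z_T/Z_0)$ from two directions. First I would establish the upper bound: using the inequality $e^{-\tau v} \le 1 - (1 - e^{-\tau}) v$ valid for $v \in [0,1]$ (recall accuracies $v_i^t$ lie in $[0,1]$), one obtains a per-step bound of the form $\ln Z_{t+1}/Z_t \le -(1 - e^{-\tau}) V_{\bm{w^t}, \bm{v^t}}$, which telescopes. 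The lower bound comes from comparing against the fixed comparator $\bm{\Tilde{w}}$, giving $\ln Z_T \ge -\tau \sum_t V_{\bm{\Tilde{w}}, \bm{v^t}} + \ln \bm{\Tilde{w}} \cdot (\text{something}) $, with the $\ln n$ term arising from the uniform initialization $w_i^0 = 1/n$.

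The crux, and the place where the restricted simplex forces new work, is controlling the distortion introduced by the projection $\pi$. In the unconstrained hedge algorithm one has $w_i^{t+1} = u_i^t / Z_t$ exactly, so the potential telescopes cleanly. Here $\bm{w^{t+1}} = \pi(\bm{x^{t+1}})$ with $\bm{x^{t+1}} = \bm{u^t}/Z_t$, so each coordinate is perturbed by the multiplicative factor $1 + \epsilon_i^t$, where $\epsilon_i^t = \pi(\bm{x^{t+1}})_i / x_i^{t+1} - 1$ is exactly the quantity defined in the statement. The main obstacle will be to carry this perturbation through the telescoping sum and show that its net contribution across the comparison with $\bm{\Tilde{w}}$ is captured by the factor $\alpha_t = \sum_{i,\, \epsilon_i^t > 0} \Tilde{w}_i v_i^t / V_{\bm{\Tilde{w}}, \bm{v^t}}$. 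Concretely, I expect the projection to only inflate the comparator's accumulated loss on those coordinates where $\epsilon_i^t > 0$ (the coordinates pushed up by the projection back into $\Delta_n^c$), and $\alpha_t$ measures the fraction of $\bm{\Tilde{w}}$'s weighted accuracy concentrated on precisely those coordinates; this is why the regret degrades by the multiplicative factor $(1 + \max_t \alpha_t)$ rather than remaining at the vanilla rate.

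Having assembled both bounds on $\ln Z_T$, I would chain them to get
\begin{equation}
\sum_{t=1}^{T} V_{\bm{w^t}, \bm{v^t}} \le \frac{\tau}{1 - e^{-\tau}} (1 + \max_t \alpha_t) \min_{\bm{\Tilde{w}} \in \Delta_n^c} \sum_{t=1}^{T} V_{\bm{\Tilde{w}}, \bm{v^t}} + \frac{\ln n}{1 - e^{-\tau}},
\end{equation}
and then substitute the prescribed $\tau = \ln\bigl(1 + \frac{1}{1 + \max_t \alpha_t}\sqrt{\tfrac{\ln n}{T}}\bigr)$. The final step is the calibration computation: using $\frac{\tau}{1 - e^{-\tau}} \le 1 + \tfrac{\tau}{2}$ and the elementary estimate $\ln(1+x) \le x$, the chosen $\tau$ is tuned so that the two error contributions balance, yielding the $\frac{\ln n}{T} + (1 + \max_t \alpha_t)\sqrt{\tfrac{\ln n}{T}}$ tail after dividing through by $T$. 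I anticipate this last part to be routine algebra once the projection-distortion bookkeeping in the $\alpha_t$ step is done correctly; the genuine difficulty is isolating the effect of $\pi$ and proving that its contribution is sign-controlled in the manner encoded by $\epsilon_i^t$ and $\alpha_t$.
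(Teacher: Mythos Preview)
Your proposal identifies all the right ingredients—the convexity inequality $e^{-\tau v}\le 1-(1-e^{-\tau})v$, the $\ln n$ from uniform initialization, the projection distortion encoded by $\epsilon_i^t$ and $\alpha_t$, and the final $\tau$ calibration—and lands on the same intermediate inequality the paper derives. The difference is in the choice of potential. The paper tracks the KL divergence $KL(\bm{\Tilde w}\,\|\,\bm{w^t})$ and splits the one-step change as
\[
\bigl[KL(\bm{\Tilde w}\,\|\,\bm{w^{t+1}})-KL(\bm{\Tilde w}\,\|\,\bm{x^{t+1}})\bigr]
\;+\;
\bigl[KL(\bm{\Tilde w}\,\|\,\bm{x^{t+1}})-KL(\bm{\Tilde w}\,\|\,\bm{w^{t}})\bigr],
\]
so the projection is isolated in the first bracket (bounded by $\alpha_t\,\tau\,V_{\bm{\Tilde w},\bm{v^t}}$) and the vanilla hedge step in the second. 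Your framing via $Z_t=\sum_i u_i^t$ and ``bounding $\ln(Z_T/Z_0)$ from two directions'' is the unnormalized-weight viewpoint from the classical hedge proof, but that telescoping relies on the product structure $W_i^T=W_i^0\prod_t e^{-\tau v_i^t}$, which the projection $\pi$ breaks at every round (here $u_i^t=w_i^t e^{-\tau v_i^t}$ is a per-step quantity, not an accumulated one, so $\ln(Z_T/Z_0)$ does not telescope as written). To recover a workable lower bound you would be forced to track $\sum_i \Tilde w_i\ln w_i^t$ instead—which is, up to sign and an additive constant, precisely the KL potential the paper uses. So the two routes converge; the KL framing simply makes the projection bookkeeping transparent from the outset, whereas your potential-function framing would need a mid-proof reformulation before the $\alpha_t$ term can be cleanly isolated.
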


\jyp{The detailed proof of \Cref{theorem:average loss} can be found in \Cref{appendix: proof of average loss}. Here we give an outline of the proof. 
We introduce an intermediate variable $\bm{\Tilde{w}}$ which can be any arbitrary weight vector and compare the difference in KL divergence between consecutive timesteps: $KL(\bm{\Tilde{w}}||\bm{w^{t+1}}) - KL(\bm{\Tilde{w}}||\bm{w^{t}})$.
This difference is expressed in terms of the original variables $\bm{w^t}$ and $\bm{v^t}$, allowing us to find an upper bound involving $\tau$, $\alpha_t$, $V_{\bm{\Tilde{w}},\bm{v^t}}$ and $V_{\bm{w^t},\bm{v^t}}$.
Compared to the proof given by \citet{Freund1999AdaptiveGP}, at this step our proof involves another difference term induced by the restricted simplex $\Delta_n^c$, which can also be represented by $\tau$ and $V_{\bm{w^t},\bm{v^t}}$.
By summing this bound from $t=1$ to $t=T$, we obtain an overall bound for $KL(\bm{\Tilde{w}}||\bm{w^{T+1}}) - KL(\bm{\Tilde{w}}||\bm{w^{1}})$.
% \TODO{"which we derive can also be"? Is there a typo?}
Finally, using the property that $\bm{w^1}$ is \modify{a} uniform distribution by default and setting a specific value for $\tau$, we derive an upper bound on the average loss of our strategy.
This bound gets smaller as $T$ gets larger, showing that our strategy approaches the optimal performance achievable by any fixed strategy in the constrained simplex.}

\modify{
\Cref{alg:theoretical} suggests that the min player should play the average (over time) of the $\bm w^t$'s as its stationary strategy, which would then yield a near-optimal fair classifier. 
However, this would require costly computation to obtain it.
Instead, we observe that if weight vector $\bm w^t$ converges, then using the last iteration is sufficient.
We formalize this point in \Cref{lemma:average loss} in \Cref{appendix:lemma}.}

\subsection{Practical Implementation}

\jyp{
In practice, the restricted simplex $\Delta_n^c$ in \Cref{alg:theoretical} is selected such that the ratio between the maximum and minimum class \modify{weights} is lower than a given value.
This is to ensure that the max-min approach does not converge to an extreme point, where all probability weight is assigned to one worst class and zero probability to all the others.
\Cref{proposition4.1} requires that when the order of weights in the weight vector $\bm w \in \Delta_n^c$ changes, we can still find the modified weight vector $\pw{\bm w_s}$ in $\Delta_n^c$. 
The $\Delta_n^c$ we choose in practice does not restrict the order of weights in the weight vector so that it always satisfies \Cref{proposition4.1} in diverse scenarios.

A practical implementation of the adapted multiplicative weights algorithm is essential due to the realization of optimization through gradient descent. 
In the theoretical formulation, the algorithm involves updating $\theta$ to maximize the weighted training accuracies. 
However, this theoretical approach requires finding an exact solution, which can be computationally infeasible for large-scale problems \pw{and can actually also be undesirable due to overfitting issues}. 
To address this challenge, we approximately maximize the weighted training accuracy by updating $\theta$ using gradient descent. 
The pseudocode in \Cref{alg:practical} describes the practical implementation of \Cref{alg:theoretical}.
Note that in \pw{contrast to} \Cref{alg:theoretical} \pw{where} the max player $\theta$ aims to maximize the weighted training accuracy\pw{, in \Cref{alg:practical}, a l}oss function is used as a surrogate due to its continuous and differentiable nature, enabling gradient-based optimization for $\theta$.

Theoretically, to ensure unbiased estimation, we should update the weight vector based on accuracies evaluated using a held-out validation \modify{dataset}.
However, empirical observations indicate that this approach can lead to decreased test accuracy.
We conjecture that this could be due to the reduced size of the training \modify{dataset} and possibly higher variance of accuracy estimates in the validation \modify{dataset}.
Consequently, in our proposed method, we use the accuracy in the training \modify{dataset} to update the weight vector.}

\begin{algorithm}[t]
\caption{CLass-dependent Multiplicative-weights Algorithm (\methodname)}
\label{alg:practical}
\textbf{Hyperparameters}: total number of training epochs $T$, total number of iterations in every epoch $K$, multiplicative weights update parameter $\tau$, the lower bound of the weights $u_{\min}$.
\begin{algorithmic}[1] %[1] enables line numbers
\State Initialize model parameters $\theta$, a weight vector $\bm{w^0}$.
\For{$t=0 \dots T$}
    \State // update $\theta$
    \For{$k=0 \dots K$}
        \State Sample a minibatch of samples from training set and compute loss $L_i^t$ for class $i$.
        \State Update the model $\theta$ with the gradients of the loss $L_\theta = \sum_{i=1}^n w^t_i\cdot L_i^t$.
    \EndFor
    \State // update $\bm{w^t}$
    \State Compute the training accuracy of all classes $\bm{v^t}$ at epoch $t$.
    \State Update $\bm{w^t}$ using $\tau$, $u_{\min}$ and $\bm{v^t}$ via \Cref{eq:our update rule}. 
\EndFor
\end{algorithmic}
\end{algorithm}
%%%%%%%%%%%%%%%%%%%%%%%%%%%%%%%%%%%%%%%%%%%%%%%%%%%%%%%%%%%%%%%%%%%%%%%%

\section{Experimental Results}
\subsection{Baselines, Setups, and Evaluation Metrics}
To investigate whether our method alleviate the class-dependent effects of data augmentation,
we perform a set of experiments comparing our method with baselines across \modify{six} different classification tasks.

\paragraph{Baselines}
Except for normal cross entropy loss (Normal), we further select four baseline methods, which are specifically designed for training a fair classifier.
These include focal loss (Focal) \citep{focal_loss}, tilted cross entropy loss (TCE) \citep{tilted_cross_entropy}, performance weighted loss (PW) \citep{pw_loss} and GGF-enhanced cross entropy loss (GGF) \citep{WEYMARK1981409giniindex, pmlr-v119-siddique20a}.
The detailed descriptions and hyperparameters are listed in \Cref{appendix: baselines,appendix: our hyperparameter}.

\paragraph{Experimental Setup}
\label{sec:experimental setup}
To evaluate the performance of our method, we run experiments on \modify{six} classification tasks: CIFAR-10, CIFAR-100 \citep{cifar}, Fashion-Mnist \citep{fmnist}, \modify{iNaturalist2018 \citep{vanhorn2018inaturalistspeciesclassificationdetection},} Mini-ImageNet and ImageNet \citep{Imagenet}.
\jyp{Regarding the network architecture, since the issue of class-dependen\pw{t} effects is brought up by \citet{BalestrieroEODA} where they employ ResNet-50 models, we follow their experimental setup and train a ResNet-50 model from scratch as well.
As discussed in \Cref{subsec: Classification as An Adversarial Two-Player Game}, our mathematical formulation \pw{reveals that this issue and our proposed method are} independent of the neural architecture of the classifier \pw{and one may expect similar results with other architectures (e.g., ViT \citep{dosovitskiy2021imageworth16x16words})}.} 
Either random resized cropping or color jittering is applied during training.
% The crop lower bound $0< \beta \leq 1$ is a hyperparameter determining the minimum possible size of the original image remaining in the cropped image.
% A smaller $\beta$ indicates a stronger data augmentation.
For each method and each task, we run experiments with 8 data augmentation parameters, crop lower bound for random resized cropping, and brightness, contrast, and saturation adjustment for color jittering. 

\paragraph{Evaluation Metrics}
To investigate whether our method can alleviate the class-dependent effects of data augmentation, we calculate the range, standard deviation, and the coefficient of variation (COV) of class accuracies on test datasets.
COV measures relative variability and is calculated as the ratio of the standard deviation to the mean.
We also compare the worst class accuracies achieved by our method on test datasets with those of the baseline methods.
The mathematical definitions of these metrics can be found in \Cref{appendix: evaluation metrics}.

\subsection{Results and Analysis}
Through these experiments, we aim to provide detailed answers to several key questions.
First, we investigate whether our method alleviate the class-dependent effects of data augmentation.
Next, we assess if our approach reduces the increase of these effects caused by data augmentation.
We also examine the improvement in performance for the worst classes.
Additionally, we present the evolution of the weight vector during training.
Furthermore, we evaluate whether our method significantly \modify{degrades} the overall performance.
Due to the page limit, we present the comparison of our method and group fairness methods in \Cref{appendix: ours apstar cifar100}, \modify{and the hyperparameter sensitivity study on $\tau$ in \Cref{appendix: CLAM different tau}}.

\paragraph{Does our method alleviate the class-dependent effects of data augmentation?}
\begin{table}[t]
    % \small
    \centering
    \setlength\tabcolsep{2pt}
    \caption{We present the fairness metrics of different methods on five datasets in the following two tables. 
    In the second column of each table, "std", "COV", and "range" respectively denote the standard deviation, coefficient of variation, and range of class accuracies on test datasets. 
    For each method used on each dataset, the reported result is averaged over 8 crop lower bounds.
    The std and range are listed in \% while COV is listed in values. 
    \textbf{Lower is better.}}
    \label{tab:fairness metrics}
    \begin{tabular}{@{}clll|lll|lll|lll|lll@{}}
    \toprule
        \multicolumn{4}{c}{CIFAR-10} & \multicolumn{3}{c}{CIFAR-100} & \multicolumn{3}{c}{Fashion-Mnist} & \multicolumn{3}{c}{Mini-ImageNet} & \multicolumn{3}{c}{ImageNet}\\
    \midrule
        Method & std & COV & range & std & COV & range & std & COV & range & std & COV & range & std & COV & range \\
    \midrule
        \makecell{Normal} & 4.55 & 0.049 & 12.6 & 11.8 & 0.157 & 39.1 & 5.94 & 0.063 & 18.3 & 11.2 & 0.152 & 37.0 & 17.3 & 0.247 & 58.5\\
        \makecell{Focal} & 4.76 & 0.052 & 14.1 & 11.3 & 0.150 & 37.2 & 6.05 & 0.064 & 18.9 & 11.2 & 0.151 & 36.6 & 17.0 & 0.252 & 57.7\\
        \makecell{PW} & 4.65 & 0.050 & 13.4 & 11.4 & 0.152 & 37.8 & 6.12 & 0.066 & 18.7 & 11.2 & 0.151 & 36.5 & 17.4 & 0.256 & 58.7\\
        \makecell{TCE} & 4.56 & 0.049 & 13.0 & 11.6 & 0.156 & 38.7 & 6.14 & 0.066 & 19.0 & 11.4 & 0.155 & 37.2 & 17.1 & 0.245 & 58.0 \\
        \makecell{GGF} & 4.12 & 0.044 & 11.9 & 11.2 & 0.149 & 36.9 & 5.98 & 0.064 & 19.4 & 12.1 & 0.172 & 41.3 & 16.9 & 0.242 & 57.2\\
        \midrule
        \textbf{\makecell{\methodname\\(Ours)}} & \textbf{3.86} & \textbf{0.042} & \textbf{10.7} & \textbf{10.8} & \textbf{0.143} & \textbf{35.9} & \textbf{5.44} & \textbf{0.058} & \textbf{16.7} & \textbf{10.6} & \textbf{0.143} & \textbf{35.1} & \textbf{16.7} & \textbf{0.241} & \textbf{56.8}\\
    \bottomrule
    \end{tabular}
\end{table}
\begin{figure}[tb]
    \centering
    \includegraphics[width=0.75\linewidth]{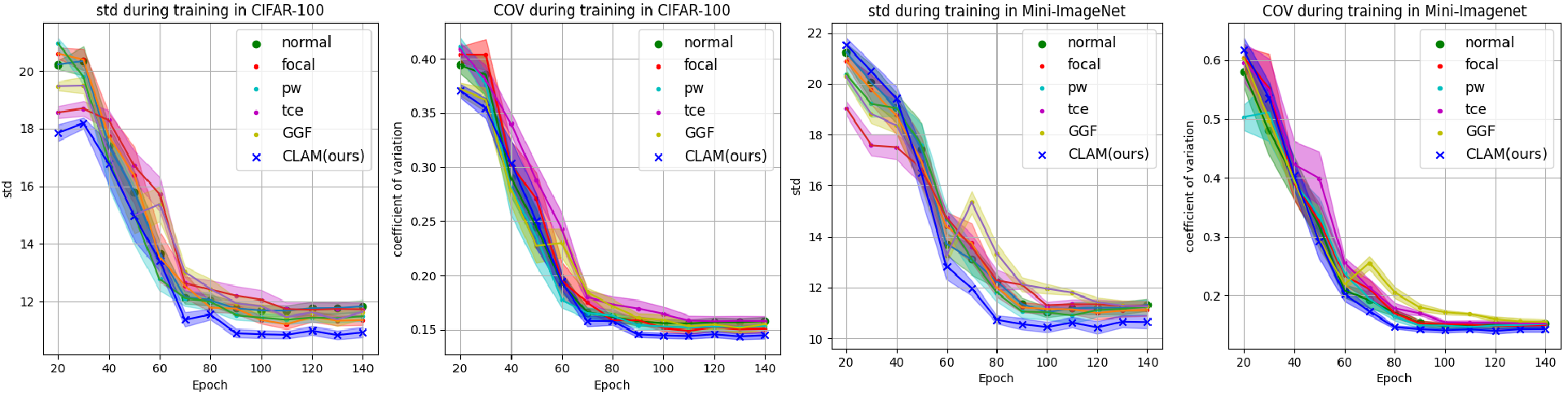}
    \caption{Fairness metrics over training epochs on CIFAR-100 and Mini-Imagenet.
    "std" and "COV" denote the standard deviation and coefficient of variation for evaluation accuracies.
    (\textbf{Lower is better.})}
    \label{fig:fairness training}
\end{figure}

To effectively address whether our method alleviates the class-dependent effects of data augmentation, we present the results from five datasets in \Cref{tab:fairness metrics}. 
Consistently across all five datasets, our method leads to a remarkable reduction in the standard deviation and range of class accuracies. 
This reduction indicates a narrower spread of accuracies among different classes, which is a positive indicator for fairness.
The coefficient of variation further corroborates the trend by demonstrating a reduced relative variability in class accuracies, which implies that the distribution of class accuracies is more uniform.
\modify{Results of color jittering as a data augmentation technique are included in \Cref{appendix: color jitter}.}
\modify{Additional results on iNaturalist2018 and ImageNet with reassessed labels \citep{beyer2020imagenet, kirichenko2023understandingdetrimentalclassleveleffects} are provided in \Cref{appendix: iNaturalist2018,appendix: ReaL Imagenet}.}
These results strongly support the assertion that our method learns a classifier that exhibits a more balanced distribution of accuracies across classes.
Plots of fairness metrics along the training process are included in \Cref{fig:fairness training}.
We can observe that our method consistently \modify{outperforms} others with lower standard deviation and COV along the training process.
Due to the page limit, plots of other datasets are included in \Cref{appendix: fairness_metric_during_training}.

\paragraph{Does our method reduce the increase of class-dependent effects of data augmentation?}
\begin{table}[t]
    \centering
    % \small
    \setlength\tabcolsep{2pt}
    \caption{Difference in test accuracy range (with DA - without DA) across five datasets, averaged over seven crop lower bounds with data augmentation.
    Lower values indicate lower increase of class-dependent effects of data augmentation. 
    The difference values are listed in \%. 
    \textbf{Lower is better.}}
    \label{tab:acc range diff}
    \begin{tabular}{clllll} 
        \toprule
            & CIFAR10 & CIFAR100 & FMnist & MImageNet & ImageNet\\
        \midrule
            Method & \multicolumn{5}{c}{Difference in Test Accuracy Range (mean $\pm$ std)}\\
        \midrule
            Normal (\citeyear{cross_entropy}) & 1.0$\pm$1.0 & 0.5$\pm$0.6 & 0.5$\pm$0.9 & 0.6$\pm$0.8 & -3.0$\pm$0.5 \\
            Focal (ICCV\citeyear{focal_loss}) & 1.7$\pm$0.5 & -1.1$\pm$0.5 & 2.3$\pm$0.7 & 1.5$\pm$0.7 & -3.5$\pm$0.6 \\
            PW (NeurIPS\citeyear{pw_loss}) & -0.5$\pm$0.6 & 2.9$\pm$0.8 & 1.1$\pm$1.8 & 0.1$\pm$0.6 & 0.0$\pm$0.8 \\
            TCE (CVPR\citeyear{tilted_cross_entropy}) & -1.2$\pm$0.5 & -1.8$\pm$0.5 & 0.5$\pm$1.6 & \textbf{-5.2}$\pm$0.2 & -1.5$\pm$0.6 \\
            GGF (ICML\citeyear{pmlr-v119-siddique20a}) & -1.7$\pm$0.6 & 0.3$\pm$0.7 & -0.7$\pm$1.1 & 0.4$\pm$1.7 & 0.0$\pm$1.0 \\
            \midrule
            \textbf{\methodname}(Ours) & \textbf{-2.4}$\pm$0.3 & \textbf{-2.3}$\pm$0.7 & \textbf{-1.4}$\pm$1.4 & -3.5$\pm$0.9 & \textbf{-4.0}$\pm$0.8 \\
        \bottomrule
    \end{tabular}
\end{table}
In this part, we aim to quantify the effectiveness of our method in reducing the increase of the class-dependent effects associated with data augmentation.
We measure the increase of class-dependent effects by calculating the difference in accuracy ranges when data augmentation is applied versus when it is not. 
A smaller difference value indicates that data augmentation leads to a reduced accuracy range.
A larger difference value suggests an increased accuracy range, implying a higher increase of class-dependent effects of data augmentation, which we aim to avoid. 
\Cref{tab:acc range diff} illustrates that our method consistently yields the lowest difference values. 
This observation suggests that our method effectively reduces the increase of these effects induced by data augmentation
Therefore, our approach ensures that the performance gains from data augmentation are not disproportionately influenced by the characteristics of certain classes, leading to a more balanced and reliable enhancement in accuracy across all classes. 

\begin{table}[t]
    \centering
    % \small
    \setlength\tabcolsep{2pt}
    \caption{The worst class test accuracies of different methods on five datasets, calculated from the worst $10\%$ classes and averaged over 8 crop lower bounds. 
    The accuracies are listed in $\%$. 
    \textbf{Higher is better.}}
    \label{tab:worst class accs}
    \begin{tabular}{clllll}
        \toprule
            & CIFAR10 & CIFAR100 & FMnist & MImageNet & ImageNet\\
        \midrule
            Method & \multicolumn{5}{c}{Worst Class Test Accuracies (mean $\pm$ std)}\\
        \midrule
            Normal (\citeyear{cross_entropy}) & 84.4$\pm$1.9 & 53.6$\pm$3.0 & 81.0$\pm$1.6 & 53.8$\pm$4.5 & 35.9$\pm$2.6 \\
            Focal (ICCV\citeyear{focal_loss}) & 82.4$\pm$1.1 & 55.1$\pm$2.4 & 80.5$\pm$1.4 & 52.9$\pm$4.2 & 34.8$\pm$2.1 \\
            PW (NeurIPS\citeyear{pw_loss}) & 83.4$\pm$1.3 & 54.7$\pm$3.0 & 80.6$\pm$3.6 & 53.9$\pm$3.8 & 34.4$\pm$2.1 \\
            TCE (CVPR\citeyear{tilted_cross_entropy}) & 84.3$\pm$1.4 & 53.1$\pm$3.7 & 80.2$\pm$3.4 & 52.8$\pm$3.2 & 36.2$\pm$2.5 \\
            GGF (ICML\citeyear{pmlr-v119-siddique20a}) & 85.1$\pm$1.9 & 55.5$\pm$3.1 & 80.0$\pm$2.0 & 47.7$\pm$3.3 & 36.6$\pm$2.2 \\
            \midrule
            \textbf{\methodname}(Ours) & \textbf{85.8}$\pm$1.4 & \textbf{56.4}$\pm$2.3 & \textbf{82.5}$\pm$2.6 & \textbf{54.8}$\pm$3.7 & \textbf{36.7}$\pm$2.9 \\
        \bottomrule
    \end{tabular}
\end{table}

\paragraph{Does our method improve the worst class performance?}
One overarching goal of our method is to ensure that the performances of the least accurate classes receive some enhancement.
% To examine the efficacy of our method in improving worst class performance, 
The accuracies of the worst $10\%$ classes across five diverse datasets are shown in \Cref{tab:worst class accs}.
We can observe a consistent trend of improvement of the worst-class accuracies with our method across all five datasets, indicating that our approach is effective in mitigating the disparity in class accuracies, resulting in more balanced classifiers.
To visualize the enhancements achieved by our method, we provide plots of worst class accuracies in \Cref{appendix: worst class accs}.

\paragraph{How does the weight vector evolve during training?}
We show the evolution of the weight vector during training.
In the loss function we normalize the weights such that the sum equals $n$ rather than $1$, a deliberate implementation choice that does not affect the order of weights.
We present the changes of the weights for certain classes in the CIFAR-10 dataset, as shown in \Cref{fig:class weights cifar10}.
We can observe that the order of the weights alternates during the initial phase of training. 
Then they begin to converge around the 45$^{th}$ epoch, and subsequently the test accuracies converge at approximately the 60$^{th}$ epoch.
Notably, the order of the converged weights is inversely related to the order of the converged test accuracies, which is desired.
Plots of the other tasks are provided in \Cref{appendix: plots class weights}.

\begin{figure}[tb]
    \centering
    \includegraphics[width=0.7\linewidth]{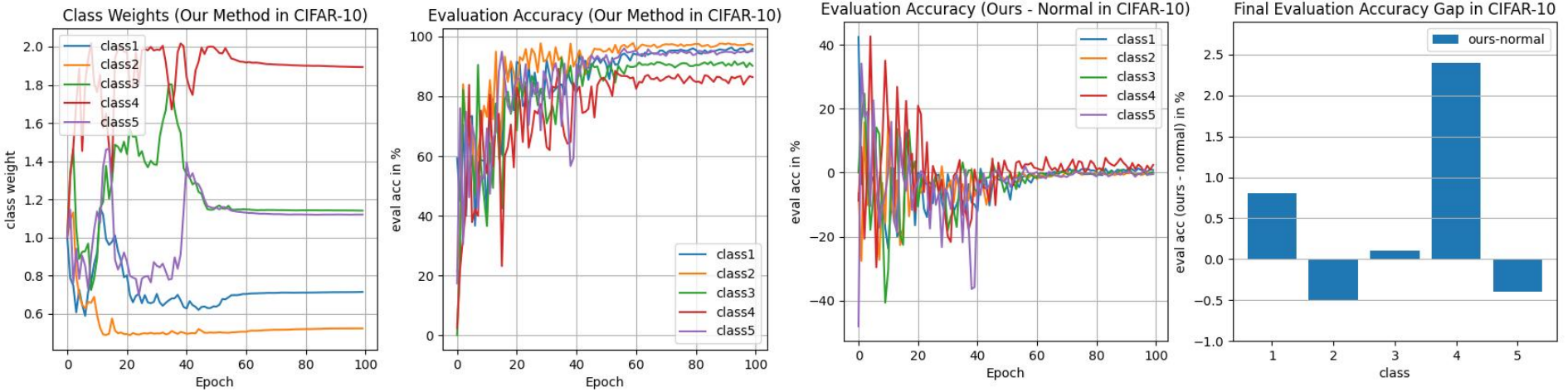}
    \caption{Evolution of class weights in CIFAR-10. 
    From left to right: class weights, evaluation accuracy, evaluation accuracy gap (difference between CLAM (ours) and normal method) during training, and final evaluation accuracy gap.}
    \label{fig:class weights cifar10}
\end{figure}

\paragraph{Does our method significantly degrade the overall performance?}
\begin{table}[t]
    \centering
    % \small
    \setlength\tabcolsep{2pt}
    \caption{The average test accuracies of different methods on five datasets. 
    We average the results over 8 crop lower bounds. 
    The accuracies are listed in \%. 
    \textbf{Higher is better.}}
    \label{tab:average accs}
    \begin{tabular}{clllll} 
        \toprule
            & CIFAR10 & CIFAR100 & FMnist & MImageNet & ImageNet\\
        \midrule
            Method & \multicolumn{5}{c}{Average Test Accuracies (mean $\pm$ std)}\\
        \midrule
            Normal (\citeyear{cross_entropy}) & 92.8$\pm$0.6 & 75.2$\pm$2.0 & 94.1$\pm$0.5 & 74.4$\pm$2.6 & \textbf{70.3}$\pm$2.6 \\
            Focal (ICCV\citeyear{focal_loss}) & 91.9$\pm$0.5 & 75.5$\pm$1.6 & 94.0$\pm$0.3 & 74.2$\pm$2.8 & 67.7$\pm$2.5 \\
            PW (NeurIPS\citeyear{pw_loss}) & 92.4$\pm$0.6 & 75.2$\pm$1.8 & 93.6$\pm$1.3 & 74.3$\pm$3.0 & 68.0$\pm$2.4 \\
            TCE (CVPR\citeyear{tilted_cross_entropy}) & 92.8$\pm$0.8 & 74.7$\pm$2.4 & 93.7$\pm$0.9 & 73.7$\pm$2.2 & 70.2$\pm$2.3 \\
            GGF (ICML\citeyear{pmlr-v119-siddique20a}) & \textbf{93.0}$\pm$1.2 & 75.3$\pm$1.8 & \textbf{94.2}$\pm$0.4 & 70.6$\pm$2.1 & 70.0$\pm$2.2 \\
            \midrule
            \textbf{\methodname}(Ours) & 92.6$\pm$0.9 & \textbf{75.7}$\pm$1.7 & 94.0$\pm$0.4 & \textbf{74.5}$\pm$2.3 & 69.6$\pm$2.4 \\
        \bottomrule
    \end{tabular}
\end{table}

We display the average test accuracies in \Cref{tab:average accs}.
In three out of the five datasets, our method results in a slight decrease in overall accuracy compared to the baselines.
However, this degradation is relatively limited compared to our improvement of worst class accuracies, suggesting that our approach manages to balance the contradictory demands of accuracy and equity.
Remarkably, in the remaining two datasets, our method surpasses all baselines, underscoring its potential to enhance both class equity and overall accuracy simultaneously.
% These findings highlight that while there may be an inevitable trade-off, our method proves its worth in achieving a more balanced performance across different classes without significantly sacrificing the overall accuracy.

\section{Conclusion}
In this paper, we focus on the class-dependent effects of data augmentation in classification. 
We propose \methodname, a novel approach that reformulates classification as a two-player game and adjusts class weights to achieve a balance between class equity and overall accuracy. \pw{\methodname} is validated both empirically and theoretically.
Importantly, our method is effective with very few hyperparameters. 
This simplicity makes \pw{it} easily adaptable to various \pw{other} machine learning applications.
For future work, \pw{it could potentially be applied to other scenarios, such as multi-task problems} to achieve an equitable distribution of performance across various tasks.

\section*{Acknowledgments} 
This work has been supported by the program of National Natural Science Foundation of China (No. 62176154) and by Shanghai Magnolia Funding Pujiang Program (No. 23PJ1404400).
We would also like to thank Yann Chevaleyre for insightful discussions.

\bibliography{ours}

\begin{thebibliography}{37}
\providecommand{\natexlab}[1]{#1}
\providecommand{\url}[1]{\texttt{#1}}
\expandafter\ifx\csname urlstyle\endcsname\relax
  \providecommand{\doi}[1]{doi: #1}\else
  \providecommand{\doi}{doi: \begingroup \urlstyle{rm}\Url}\fi

\bibitem[Agarwal et~al.(2018)Agarwal, Beygelzimer, Dudík, Langford, and
  Wallach]{agarwal2018reductionsapproachfairclassification}
Alekh Agarwal, Alina Beygelzimer, Miroslav Dudík, John Langford, and Hanna
  Wallach.
\newblock A reductions approach to fair classification, 2018.
\newblock URL \url{https://arxiv.org/abs/1803.02453}.

\bibitem[Alghamdi et~al.(2022)Alghamdi, Hsu, Jeong, Wang, Michalak, Asoodeh,
  and Calmon]{alghamdi2022adultcompasfairnessmulticlass}
Wael Alghamdi, Hsiang Hsu, Haewon Jeong, Hao Wang, P.~Winston Michalak, Shahab
  Asoodeh, and Flavio~P. Calmon.
\newblock Beyond adult and compas: Fairness in multi-class prediction, 2022.
\newblock URL \url{https://arxiv.org/abs/2206.07801}.

\bibitem[Anchlia \& Choi(2023)Anchlia and Choi]{anchlia2023a}
Saurav Anchlia and YooJung Choi.
\newblock A probabilistic approach to fairness under label bias.
\newblock In \emph{The 6th Workshop on Tractable Probabilistic Modeling}, 2023.
\newblock URL \url{https://openreview.net/forum?id=a1NfxQP0pY}.

\bibitem[Balestriero et~al.(2022)Balestriero, Bottou, and
  LeCun]{BalestrieroEODA}
Randall Balestriero, Leon Bottou, and Yann LeCun.
\newblock The effects of regularization and data augmentation are class
  dependent.
\newblock In S.~Koyejo, S.~Mohamed, A.~Agarwal, D.~Belgrave, K.~Cho, and A.~Oh
  (eds.), \emph{Advances in Neural Information Processing Systems}, volume~35,
  pp.\  37878--37891. Curran Associates, Inc., 2022.
\newblock URL
  \url{https://proceedings.neurips.cc/paper_files/paper/2022/file/f73c04538a5e1cad40ba5586b4b517d3-Paper-Conference.pdf}.

\bibitem[Beyer et~al.(2020)Beyer, Hénaff, Kolesnikov, Zhai, and van~den
  Oord]{beyer2020imagenet}
Lucas Beyer, Olivier~J. Hénaff, Alexander Kolesnikov, Xiaohua Zhai, and Aäron
  van~den Oord.
\newblock Are we done with imagenet?, 2020.
\newblock URL \url{https://arxiv.org/abs/2006.07159}.

\bibitem[Bitterwolf et~al.(2022)Bitterwolf, Meinke, Boreiko, and
  Hein]{bitterwolf2022classifiers}
Julian Bitterwolf, Alexander Meinke, Valentyn Boreiko, and Matthias Hein.
\newblock Classifiers should do well even on their worst classes.
\newblock In \emph{ICML 2022 Shift Happens Workshop}, 2022.
\newblock URL \url{https://openreview.net/forum?id=QxIXCVYJ2WP}.

\bibitem[Caton \& Haas(2020)Caton and Haas]{fairness_survey}
Simon Caton and Christian Haas.
\newblock Fairness in machine learning: {A} survey.
\newblock \emph{CoRR}, abs/2010.04053, 2020.
\newblock URL \url{https://arxiv.org/abs/2010.04053}.

\bibitem[Dosovitskiy et~al.(2021)Dosovitskiy, Beyer, Kolesnikov, Weissenborn,
  Zhai, Unterthiner, Dehghani, Minderer, Heigold, Gelly, Uszkoreit, and
  Houlsby]{dosovitskiy2021imageworth16x16words}
Alexey Dosovitskiy, Lucas Beyer, Alexander Kolesnikov, Dirk Weissenborn,
  Xiaohua Zhai, Thomas Unterthiner, Mostafa Dehghani, Matthias Minderer, Georg
  Heigold, Sylvain Gelly, Jakob Uszkoreit, and Neil Houlsby.
\newblock An image is worth 16x16 words: Transformers for image recognition at
  scale, 2021.
\newblock URL \url{https://arxiv.org/abs/2010.11929}.

\bibitem[Freund \& Schapire(1999)Freund and Schapire]{Freund1999AdaptiveGP}
Yoav Freund and Robert~E. Schapire.
\newblock Adaptive game playing using multiplicative weights.
\newblock \emph{Games and Economic Behavior}, 29:\penalty0 79--103, 1999.
\newblock URL \url{https://api.semanticscholar.org/CorpusID:15295656}.

\bibitem[Gneiting \& Raftery(2007)Gneiting and Raftery]{cross_entropy}
Tilmann Gneiting and Adrian~E Raftery.
\newblock Strictly proper scoring rules, prediction, and estimation.
\newblock \emph{Journal of the American Statistical Association}, 102\penalty0
  (477):\penalty0 359--378, 2007.
\newblock \doi{10.1198/016214506000001437}.
\newblock URL \url{https://doi.org/10.1198/016214506000001437}.

\bibitem[Hansen et~al.(2021)Hansen, Su, and Wang]{SVEA}
Nicklas Hansen, Hao Su, and Xiaolong Wang.
\newblock Stabilizing deep q-learning with convnets and vision transformers
  under data augmentation.
\newblock \emph{CoRR}, abs/2107.00644, 2021.
\newblock URL \url{https://arxiv.org/abs/2107.00644}.

\bibitem[Heidari et~al.(2018)Heidari, Ferrari, Gummadi, and
  Krause]{HeidariFerrariGummadiKrause18}
Hoda Heidari, Claudio Ferrari, Krishna~P. Gummadi, and Andreas Krause.
\newblock Fairness behind a veil of ignorance: A welfare analysis for automated
  decision making.
\newblock In \emph{NeurIPS}, 2018.

\bibitem[Horn et~al.(2018)Horn, Aodha, Song, Cui, Sun, Shepard, Adam, Perona,
  and Belongie]{vanhorn2018inaturalistspeciesclassificationdetection}
Grant~Van Horn, Oisin~Mac Aodha, Yang Song, Yin Cui, Chen Sun, Alex Shepard,
  Hartwig Adam, Pietro Perona, and Serge Belongie.
\newblock The inaturalist species classification and detection dataset, 2018.
\newblock URL \url{https://arxiv.org/abs/1707.06642}.

\bibitem[Jung et~al.(2021)Jung, Chun, and
  Moon]{DBLP:journals/corr/abs-2111-14581}
Sangwon Jung, Sanghyuk Chun, and Taesup Moon.
\newblock Learning fair classifiers with partially annotated group labels.
\newblock \emph{CoRR}, abs/2111.14581, 2021.
\newblock URL \url{https://arxiv.org/abs/2111.14581}.

\bibitem[Kirichenko et~al.(2023)Kirichenko, Ibrahim, Balestriero, Bouchacourt,
  Vedantam, Firooz, and
  Wilson]{kirichenko2023understandingdetrimentalclassleveleffects}
Polina Kirichenko, Mark Ibrahim, Randall Balestriero, Diane Bouchacourt,
  Ramakrishna Vedantam, Hamed Firooz, and Andrew~Gordon Wilson.
\newblock Understanding the detrimental class-level effects of data
  augmentation, 2023.
\newblock URL \url{https://arxiv.org/abs/2401.01764}.

\bibitem[Kostrikov et~al.(2020)Kostrikov, Yarats, and Fergus]{DrQ}
Ilya Kostrikov, Denis Yarats, and Rob Fergus.
\newblock Image augmentation is all you need: Regularizing deep reinforcement
  learning from pixels.
\newblock \emph{CoRR}, abs/2004.13649, 2020.
\newblock URL \url{https://arxiv.org/abs/2004.13649}.

\bibitem[Krishnaswamy et~al.(2021)Krishnaswamy, Jiang, Wang, Cheng, and
  Munagala]{krishnaswamy2021fairallbesteffortfairness}
Anilesh~K. Krishnaswamy, Zhihao Jiang, Kangning Wang, Yu~Cheng, and Kamesh
  Munagala.
\newblock Fair for all: Best-effort fairness guarantees for classification,
  2021.
\newblock URL \url{https://arxiv.org/abs/2012.10216}.

\bibitem[Krizhevsky \& Hinton(2009)Krizhevsky and Hinton]{cifar}
A.~Krizhevsky and G.~Hinton.
\newblock Learning multiple layers of features from tiny images.
\newblock \emph{Master's thesis, Department of Computer Science, University of
  Toronto}, 2009.

\bibitem[Kumar et~al.(2023)Kumar, Mileo, Brennan, and Bendechache]{DA_CV}
Teerath Kumar, Alessandra Mileo, Rob Brennan, and Malika Bendechache.
\newblock Image data augmentation approaches: A comprehensive survey and future
  directions, 2023.

\bibitem[Lin et~al.(2017)Lin, Goyal, Girshick, He, and Dollár]{focal_loss}
Tsung-Yi Lin, Priya Goyal, Ross Girshick, Kaiming He, and Piotr Dollár.
\newblock Focal loss for dense object detection.
\newblock In \emph{2017 IEEE International Conference on Computer Vision
  (ICCV)}, pp.\  2999--3007, 2017.
\newblock \doi{10.1109/ICCV.2017.324}.

\bibitem[Ma et~al.(2022)Ma, Wang, Yuan, Wang, Yuan, and Tao]{DA_DRL}
Guozheng Ma, Zhen Wang, Zhecheng Yuan, Xueqian Wang, Bo~Yuan, and Dacheng Tao.
\newblock A comprehensive survey of data augmentation in visual reinforcement
  learning, 2022.

\bibitem[Martinez et~al.(2020)Martinez, Bertran, and
  Sapiro]{martinez2020minimaxparetofairnessmulti}
Natalia Martinez, Martin Bertran, and Guillermo Sapiro.
\newblock Minimax pareto fairness: A multi objective perspective, 2020.
\newblock URL \url{https://arxiv.org/abs/2011.01821}.

\bibitem[Meyer \& Wong(2022)Meyer and Wong]{pw_loss}
Robbie Meyer and Alexander Wong.
\newblock A fair loss function for network pruning.
\newblock In \emph{Workshop on Trustworthy and Socially Responsible Machine
  Learning, NeurIPS 2022}, 2022.
\newblock URL \url{https://openreview.net/forum?id=Y8PmDhBdmv}.

\bibitem[Ogryczak et~al.(2014)Ogryczak, Luss, Pi{\'o}ro, Nace, and
  Tomaszewski]{OgryczakLussPioroNaceTomaszewski14}
Wlodzimierz Ogryczak, Hanan Luss, Micha{\l} Pi{\'o}ro, Dritan Nace, and Artur
  Tomaszewski.
\newblock Fair optimization and networks: A survey.
\newblock \emph{Journal of Applied Mathematics}, 2014, 2014.

\bibitem[Park et~al.(2022)Park, Lee, Lee, Hwang, Kim, and Byun]{9879152}
Sungho Park, Jewook Lee, Pilhyeon Lee, Sunhee Hwang, Dohyung Kim, and Hyeran
  Byun.
\newblock Fair contrastive learning for facial attribute classification.
\newblock In \emph{2022 IEEE/CVF Conference on Computer Vision and Pattern
  Recognition (CVPR)}, pp.\  10379--10388, 2022.
\newblock \doi{10.1109/CVPR52688.2022.01014}.

\bibitem[Rawls(1971)]{Rawls71}
John Rawls.
\newblock \emph{The Theory of Justice}.
\newblock Havard university press, 1971.

\bibitem[Russakovsky et~al.(2015)Russakovsky, Deng, Su, Krause, Satheesh, Ma,
  Huang, Karpathy, Khosla, Bernstein, Berg, and Fei-Fei]{Imagenet}
Olga Russakovsky, Jia Deng, Hao Su, Jonathan Krause, Sanjeev Satheesh, Sean Ma,
  Zhiheng Huang, Andrej Karpathy, Aditya Khosla, Michael Bernstein,
  Alexander~C. Berg, and Li~Fei-Fei.
\newblock {ImageNet Large Scale Visual Recognition Challenge}.
\newblock \emph{International Journal of Computer Vision (IJCV)}, 115\penalty0
  (3):\penalty0 211--252, 2015.
\newblock \doi{10.1007/s11263-015-0816-y}.

\bibitem[Scott \& Zhang(2020)Scott and
  Zhang]{scott2020learninglabelproportionsmutual}
Clayton Scott and Jianxin Zhang.
\newblock Learning from label proportions: A mutual contamination framework,
  2020.
\newblock URL \url{https://arxiv.org/abs/2006.07330}.

\bibitem[Shorten \& Khoshgoftaar(2019)Shorten and Khoshgoftaar]{Shorten2019ASO}
Connor Shorten and Taghi~M. Khoshgoftaar.
\newblock A survey on image data augmentation for deep learning.
\newblock \emph{Journal of Big Data}, 6:\penalty0 1--48, 2019.
\newblock URL \url{https://api.semanticscholar.org/CorpusID:195811894}.

\bibitem[Siddique et~al.(2020)Siddique, Weng, and
  Zimmer]{pmlr-v119-siddique20a}
Umer Siddique, Paul Weng, and Matthieu Zimmer.
\newblock Learning fair policies in multi-objective ({D}eep) reinforcement
  learning with average and discounted rewards.
\newblock In Hal~Daumé III and Aarti Singh (eds.), \emph{Proceedings of the
  37th International Conference on Machine Learning}, volume 119 of
  \emph{Proceedings of Machine Learning Research}, pp.\  8905--8915. PMLR,
  13--18 Jul 2020.
\newblock URL \url{https://proceedings.mlr.press/v119/siddique20a.html}.

\bibitem[Speicher et~al.(2018)Speicher, Heidari, {Grgic-Hlaca}, Gummadi,
  Singla, Weller, and
  Zafar]{SpeicherHeidariGrgicHlacaGummadiSinglaWellerZafar18}
Till Speicher, Hoda Heidari, Nina {Grgic-Hlaca}, Krishna~P. Gummadi, Adish
  Singla, Adrian Weller, and Muhammad~Bilal Zafar.
\newblock A unified approach to quantifying algorithmic unfairness: Measuring
  individual \& group unfairness via inequality indices.
\newblock In \emph{KDD}, pp.\  2239--2248, 2018.

\bibitem[Szab{\'{o}} et~al.(2021)Szab{\'{o}}, Rad, and
  Mannava]{tilted_cross_entropy}
Attila Szab{\'{o}}, Hadi~Jamali Rad, and Siva{-}Datta Mannava.
\newblock Tilted cross entropy {(TCE):} promoting fairness in semantic
  segmentation.
\newblock \emph{CoRR}, abs/2103.14051, 2021.
\newblock URL \url{https://arxiv.org/abs/2103.14051}.

\bibitem[Weng(2019)]{Weng19}
Paul Weng.
\newblock Fairness in reinforcement learning.
\newblock In \emph{AI for Social Good Workshop at IJCAI}, 2019.

\bibitem[Weymark(1981)]{WEYMARK1981409giniindex}
John~A. Weymark.
\newblock Generalized gini inequality indices.
\newblock \emph{Mathematical Social Sciences}, 1\penalty0 (4):\penalty0
  409--430, 1981.
\newblock ISSN 0165-4896.
\newblock \doi{https://doi.org/10.1016/0165-4896(81)90018-4}.
\newblock URL
  \url{https://www.sciencedirect.com/science/article/pii/0165489681900184}.

\bibitem[Xiao et~al.(2017)Xiao, Rasul, and Vollgraf]{fmnist}
Han Xiao, Kashif Rasul, and Roland Vollgraf.
\newblock Fashion-mnist: a novel image dataset for benchmarking machine
  learning algorithms.
\newblock \emph{CoRR}, abs/1708.07747, 2017.
\newblock URL \url{http://arxiv.org/abs/1708.07747}.

\bibitem[Xu et~al.(2019)Xu, Yuan, Zhang, and Wu]{9006322}
Depeng Xu, Shuhan Yuan, Lu~Zhang, and Xintao Wu.
\newblock Fairgan+: Achieving fair data generation and classification through
  generative adversarial nets.
\newblock In \emph{2019 IEEE International Conference on Big Data (Big Data)},
  pp.\  1401--1406, 2019.
\newblock \doi{10.1109/BigData47090.2019.9006322}.

\bibitem[Yarats et~al.(2021)Yarats, Fergus, Lazaric, and Pinto]{drqv2}
Denis Yarats, Rob Fergus, Alessandro Lazaric, and Lerrel Pinto.
\newblock Mastering visual continuous control: Improved data-augmented
  reinforcement learning.
\newblock \emph{CoRR}, abs/2107.09645, 2021.
\newblock URL \url{https://arxiv.org/abs/2107.09645}.

\end{thebibliography}
\bibliographystyle{tmlr}

\begin{appendices}

\appendix
\section{Proof of \Cref{proposition4.1}}
\label{appendix: proof of proposition}
Let $f(v) = \min_{\bm{w} \in \Delta_n^c} \sum_{i} w_i v_{i}$.
If $\Delta^c_n$ satisfies the following property:
$\forall \bm{w} \in \Delta^c_n, \forall s \in \mathbb S_n, \bm{w_s} \in \Delta^c_n$, \jyp{where $\mathbb S_n$ is the symmetric group of order $n$, i.e., set of all permutations of $n$ elements and $\bm{w_s}$ is a reordering of the weight vector $\bm{w}$ based on $s$,}
then $f$ is 
\begin{itemize}
    \item monotonic: $\forall \bm{v}, \bm{v'}, \bm{v} \le \bm{v'} \Rightarrow f(\bm{v}) \le f(\bm{v'})$, 
    \item Schur-concave: $\forall \bm{v}, v_i > v_j$, $\forall \epsilon>0,  \epsilon < v_i-v_j$, $f(\bm{v} + \epsilon \bm{e_j} - \epsilon \bm{e_i}) \ge f(\bm{v})$,
    \item symmetric: $f(\bm{v}) = f(\bm{v_s})$,
\end{itemize}
where $\bm{e_j}$ represents a vector that is 1 at the $j^{th}$ element and 0 otherwise.

\begin{proof}
Firstly, it is obvious that $f$ is symmetric.
$f(\bm{v}) = f(\bm{v_s})$ because $\forall \bm{w} \in \Delta^c_n, \forall s, \bm{w_s} \in \Delta^c_n$.

Then, we prove that $f$ is monotonic.
Let $\bm{w'}=\arg\min_{\bm{w}\in\Delta_n^c} \bm{w}\cdot \bm{v'}$.

\begin{equation}
\begin{aligned}
    f(\bm{v'}) &= \min_{\bm{w}\in\Delta_n^c} \bm{w} \cdot \bm{v'}\\
    & = \bm{w'}\cdot \bm{v'}\\
    &\ge \bm{w'} \cdot \bm{v} \\
    &\ge \min_{\bm{w}\in \Delta_n^c} \bm{w} \cdot \bm{v}\\
    & = f(v)
\end{aligned}
\end{equation}
The first inequality holds because $\bm{v'} \ge \bm{v}$.
The second inequality holds by definition.

Finally, we prove that $f$ is Schur-concave.
Let $\bm{y}=\arg\min_{\bm{w} \in\Delta_n^c}\bm{w}\cdot (\bm{v} + \epsilon \bm{e_j} - \epsilon \bm{e_i})$. 
There are two scenarios depending on the value of $\epsilon$.

If $0 < \epsilon \le \frac{1}{2}(v_i - v_j)$, then $v_j + \epsilon \le v_i - \epsilon$.
By definition of $\bm{y}$, $y_j \ge y_i$.
\begin{equation}
\begin{aligned}
    f(\bm{v} + \epsilon \bm{e_j} - \epsilon \bm{e_i}) &= \min_{\bm{w} \in\Delta_n^c} \bm{w} \cdot (\bm{v} + \epsilon \bm{e_j} - \epsilon \bm{e_i})\\
    &= \bm{y} \cdot (\bm{v} + \epsilon \bm{e_j} - \epsilon \bm{e_i})\\
    &= y_i \cdot (v_i-\epsilon) + y_j \cdot (v_j+\epsilon) + \sum_{k\neq i,j} y_k \cdot v_k\\
    &= \epsilon \cdot (y_j - y_i) + \sum_{k} y_k \cdot v_k\\
    &\ge \sum_{k} y_k \cdot v_k\\
    &\ge \min_{\bm{w} \in\Delta_n^c}\bm{w}\cdot \bm{v}\\
    &= f(\bm{v})
\end{aligned}
\end{equation}

If $\frac{1}{2}(v_i - v_j) \le \epsilon < v_i - v_j$, we introduce an intermediate variable $\epsilon' = v_i-v_j-\epsilon$, satisfying $0 < \epsilon' \le \frac{1}{2}(v_i - v_j)$.
$v_j + \epsilon' = v_i - \epsilon$ and $v_i - \epsilon' = v_j + \epsilon$. 
Since $f$ is symmetric, 
\begin{equation}
\begin{aligned}
    f(\bm{v} + \epsilon \bm{e_j} - \epsilon \bm{e_i}) &= f(\bm{v} + \epsilon'\bm{e_j} - \epsilon'\bm{e_i})\\
    &\ge f(v)
\end{aligned}
\end{equation}
Therefore, $f(\bm{v} + \epsilon \bm{e_j} - \epsilon \bm{e_i}) \ge f(\bm{v})$.
\end{proof}

\section{Proof of \Cref{theorem:average loss}}
\label{appendix: proof of average loss}

The average loss of the min player's strategy compared to the optimal fixed strategy has an upper bound if we set $\tau$ as $\ln(1+\frac{1}{1+\underset{t}{\max}{\alpha_t}}\sqrt{\frac{\ln n}{T}})$.
\begin{equation}
\begin{aligned}    
    \frac{1}{T}\sum_{t=1}^{t=T}V_{\bm{w^t},\bm{v^t}} &\leq \frac{1}{T}\min_{\bm{\Tilde{w}}\in \Delta_n^c}\sum_{t=1}^{t=T} V_{\bm{\Tilde{w}},\bm{v^t}} + \frac{\ln n}{T}\\ 
    &+ (1 + \max_t \alpha_t)\cdot \sqrt{\frac{\ln n}{T}}
\end{aligned}
\end{equation}
where $T$ is the number of total epochs, $\bm{\Tilde{w}}$ can be any arbitrary weight vector, $\bm{w^t}$ is the weight vector chosen by our method at epoch $t$, $\bm{v^t}$ is the accuracy vector for all classes at epoch $t$, $V_{\bm{w},\bm{v}} = \sum_{i=1}^n w_{i}v_{i}$ is the weighted accuracy of choosing $w$ when the accuracy vector is $\bm{v}$ and can also be regarded as the loss for the min player, $\alpha_t = \frac{\sum_{i, \epsilon_i^t>0}{\Tilde{w_i}v_i^t}}{\sum_{i}{\Tilde{w_i}v_i^t}} = \frac{\sum_{i, \epsilon_i^t>0}{\Tilde{w_i}v_i^t}}{V_{\bm{\Tilde{w}}, \bm{v}}}$, $\bm{x^{t+1}}=\frac{\bm{u^t}}{\sum_i u^t_i}$, and
$\epsilon_i^t = \frac{\pi(\bm{x^{t+1}})_i}{x^{t+1}_i} - 1$.

\begin{proof}
We introduce an intermediate variable $\bm{x^{t+1}}=\frac{\bm{u^t}}{\sum_i \bm{u^t}_i}$.
Then our update rule can be rewritten as follows.
\begin{equation}
    \begin{aligned}
        &\bm{u^t} \leftarrow \bm{w^t}{\times}e^{-\tau\cdot \bm{v^t}},\\
        &\bm{w^{t+1}} \leftarrow \pi(\bm{x^{t+1}}).
    \end{aligned}
\end{equation}
We begin the proof by analyzing $KL(\bm{\Tilde{w}}||\bm{w^{t+1}}) - KL(\bm{\Tilde{w}}||\bm{w^{t}})$, where $\bm{\Tilde{w}}$ is any arbitrary weight vector.
\begin{equation}
\begin{aligned}
    &KL(\bm{\Tilde{w}}||\bm{w^{t+1}}) - KL(\bm{\Tilde{w}}||\bm{w^{t}})\\
    &= [KL(\bm{\Tilde{w}}||\bm{w^{t+1}}) - KL(\bm{\Tilde{w}}||\bm{x^{t+1}})]\\
    &+[KL(\bm{\Tilde{w}}||\bm{x^{t+1}}) - KL(\bm{\Tilde{w}}||\bm{w^{t}})]\\  
\end{aligned}
\end{equation}

We firstly prove that for any $\bm{\Tilde{w}}$, $KL(\bm{\Tilde{w}}||\bm{w^{t+1}}) - KL(\bm{\Tilde{w}}||\bm{x^{t+1}})$ has an upper bound.

Let 
\begin{equation}
\begin{aligned}
    \epsilon_i^t &= \frac{\pi(\bm{x^{t+1}})_i}{\bm{x^{t+1}}_i} - 1\\
    \alpha_t &= \frac{\sum_{i, \epsilon_i^t>0}{\Tilde{w_i}v_i^t}}{\sum_{i}{\Tilde{w_i}v_i^t}} = \frac{\sum_{i, \epsilon_i^t>0}{\Tilde{w_i}v_i^t}}{V_{\bm{\Tilde{w}}, \bm{v}}}
\end{aligned}
\end{equation}

\begin{equation}
    \label{eq:kl_step_diff_term1}
    \begin{aligned}
        &KL(\bm{\Tilde{w}}||\bm{w^{t+1}}) - KL(\bm{\Tilde{w}}||\bm{x^{t+1}})\\
        &=\sum_{i=1}^{i=n} \Tilde{w_i}\ln \frac{\bm{x}^{t+1}_i}{\bm{w}_i^{t+1}}\\
        &=\sum_{i=1}^{i=n} \Tilde{w_i}\ln \frac{\bm{x}^{t+1}_i}{\pi(\bm{x}^{t+1})_i}\\
        % &=\sum_{i=1}^{i=n} \Tilde{w_i} [\ln\frac{u_i^t}{\sum_{j=1}^{j=n} u_j^t} - \ln\frac{\pi(\bm{u^t})_i}{\sum_{j=1}^{j=n} \pi(\bm{u^t})_j}]\\
        % &=\sum_{i=1}^{i=n} \Tilde{w_i} [\ln\frac{u_i^t}{\pi(\bm{u^t})_i} + \ln\frac{\sum_{j=1}^{j=n} \pi(\bm{u^t})_j}{\sum_{j=1}^{j=n} u_j^t}]\\
        % &=\sum_{i=1}^{i=n} \Tilde{w_i} [\ln\frac{u_i^t}{\pi(\bm{u^t})_i} + \ln\frac{\sum_{j=1}^{j=n} u_j^t(1+\epsilon_i^t)}{\sum_{j=1}^{j=n} u_j^t}]\\
        % &\approx\sum_{i=1}^{i=n} \Tilde{w_i} [\ln\frac{u_i^t}{\pi(\bm{u^t})_i}]\\
        & \leq \sum_{i, \epsilon_i^t>0} \Tilde{w_i} \tau v_i^t = \alpha_t \cdot \tau V_{\bm{\Tilde{w}}, \bm{v}}\\
    \end{aligned}
\end{equation}

The inequality holds because $\ln\frac{\pi(\bm{x^{t+1}})_i}{\bm{x^{t+1}}_i}=0$ for $\epsilon_i^t=0$ and 
$\pi(\bm{x^{t+1}})_i = \frac{\max(x^{t+1}_i, x_{\min})}{Z_x} = \frac{x_{\min}}{Z_x} \geq x_{\min} \cdot e^{-\tau\cdot v_i^t} \geq x^{t+1}_i \cdot e^{-\tau\cdot v_i^t}$ for $\epsilon_i^t>0$, where $Z_x = \sum_{i=1}^{i=n}\max(x_i^{t+1}, x_{\min}) = \frac{\sum_{i=1}^{i=n}\max(x_i^{t+1}, x_{\min})}{\sum_{i=1}^{i=n}x_i^{t+1}} \leq \frac{\sum_{i, \epsilon_i^t>0 }\max(x_i^{t+1}, x_{\min})}{\sum_{i, \epsilon_i^t>0 }x_i^{t+1}} \leq \frac{\sum_{i, \epsilon_i^t>0}x_{\min}}{\sum_{i, \epsilon_i^t>0}x_{\min}\cdot e^{-\tau\cdot \min(v_i^t)}} \leq 
e^{\tau\cdot\min(v_i^t)}$.

We then prove that for any $\bm{\Tilde{w}}$, $KL(\bm{\Tilde{w}}||\bm{x^{t+1}}) - KL(\bm{\Tilde{w}}||\bm{w^{t}})$ has an upper bound.
For simplicity, define $V_{\bm{w},\bm{v}} = \sum_{i} w_{i} v_{i}$ and $Z_t=\sum_i u_i^t$.

\begin{equation}
    \label{eq:kl_step_diff_term2}
    \begin{aligned}
        &KL(\bm{\Tilde{w}}||\bm{x^{t+1}}) - KL(\bm{\Tilde{w}}||\bm{w^t})\\
        &= \sum_{i=1}^{i=n} \Tilde{w}_i \ln\frac{w^t_i}{x_i^{t+1}}\\
        &= \sum_{i=1}^{i=n} \Tilde{w}_i \ln(\frac{Z_t}{e^{-\tau\cdot v_i^t}})\\
        &= \sum_{i=1}^{i=n} \Tilde{w}_i \ln Z_t  + \tau \sum_{i=1}^{i=n} \Tilde{w}_i v_i^t\\
        &= \ln Z_t + \tau V_{\bm{\Tilde{w}},\bm{v^t}} \\
        &= \ln \sum_{i=1}^{i=n}{w^t_i \cdot e^{-\tau\cdot v_i^t}} + \tau V_{\bm{\Tilde{w}},\bm{v^t}}\\
        &\leq \ln \sum_{i=1}^{i=n}{w^t_i \cdot (1-(1-e^{-\tau})v_i^t)} + \tau V_{\bm{\Tilde{w}},\bm{v^t}}\\
        &= \ln \left(1 - (1-e^{-\tau})\sum_{i=1}^{i=n}w^t_i \cdot v_i^t\right) + \tau V_{\bm{\Tilde{w}},\bm{v^t}}\\
        &= \ln(1-(1-e^{-\tau})V_{\bm{w^t},\bm{v^t}}) + \tau V_{\bm{\Tilde{w}},\bm{v^t}}\\
        &\leq -(1-e^{-\tau})V_{\bm{w^t},\bm{v^t}} + \tau V_{\bm{\Tilde{w}},\bm{v^t}}\\
    \end{aligned}
\end{equation}

The first inequality holds because $e^{-\tau v_i^t}<=1-(1-e^{-\tau})v_i^t$ for $\tau>0$ and $0<=v_i^t<=1$. 
The second inequality holds because $\ln(1-(1-e^{-\tau})V_{\bm{w^t},\bm{v^t}}) \leq -(1-e^{-\tau})V_{\bm{w^t},\bm{v^t}}$ for $0 <= (1-e^{-\tau})V_{\bm{w^t},\bm{v^t}} < 1$.

Sum \cref{eq:kl_step_diff_term1} and \cref{eq:kl_step_diff_term2} together, we get the upper bound of $KL(\bm{\Tilde{w}}||\bm{w^{t+1}}) - KL(\bm{\Tilde{w}}||\bm{w^{t}})$.

\begin{equation}
    \label{eq:kl_step_diff}
    \begin{aligned}
        &KL(\bm{\Tilde{w}}||\bm{w^{t+1}}) - KL(\bm{\Tilde{w}}||\bm{w^{t}})\\
        &\leq -(1-e^{-\tau})V_{\bm{w^t},\bm{v^t}} + (1+\alpha_t)\tau V_{\bm{\Tilde{w}},\bm{v^t}}
    \end{aligned}
\end{equation}

Sum \cref{eq:kl_step_diff} from $t=1$ to $t=T$, we get the upper bound of $ KL(\bm{\Tilde{w}}||\bm{w^{T+1}}) - KL(\bm{\Tilde{w}}||\bm{w^{1}})$.

\begin{equation}
    \label{eq:kl_diff}
    \begin{aligned}
        &KL(\bm{\Tilde{w}}||\bm{w^{T+1}}) - KL(\bm{\Tilde{w}}||\bm{w^{1}})\\
        &\leq \sum_{t=1}^{t=T} \left(-(1-e^{-\tau})V_{\bm{w^t},\bm{v^t}} + (1+\alpha_t)\tau V_{\bm{\Tilde{w}},\bm{v^t}}\right)
    \end{aligned}
\end{equation}

From \cref{eq:kl_diff}, we get the upper bound of $\sum_{t=1}^{t=T}V_{\bm{w^t},\bm{v^t}}$. Note that since $w^{1}_i=\frac{1}{n}$ for all $i$, $KL(\bm{\Tilde{w}}||\bm{w^{T+1}}) - KL(\bm{\Tilde{w}}||\bm{w^{1}}) \leq \ln n$ for any $\bm{\Tilde{w}}$. Here, we set $\tau$ as $\ln(1+\frac{1}{1+\underset{t}{\max}{\alpha_t}}\sqrt{\frac{\ln n}{T}})$.

\begin{equation}
    \begin{aligned}
        \sum_{t=1}^{t=T}V_{\bm{w^t},\bm{v^t}} &\leq \frac{1}{1-e^{-\tau}}\Big(KL(\bm{\Tilde{w}}||\bm{w^{1}}) - KL(\bm{\Tilde{w}}||\bm{w^{T+1}})\\
        &+ (1+\alpha_t)\tau \sum_{t=1}^{t=T} V_{\bm{\Tilde{w}},\bm{v^t}}\Big)\\
        &\leq \frac{1}{1-e^{-\tau}}{\ln n}\\
        &+ (1+\underset{t}{\max}{\alpha_t})\frac{\tau}{1-e^{-\tau}}\sum_{t=1}^{t=T} V_{\bm{\Tilde{w}},\bm{v^t}}\\
        &\leq (1+\frac{1}{e^\tau-1}){\ln n}\\
        &+ (1+\underset{t}{\max}{\alpha_t})\frac{1+e^{-\tau}}{2e^{-\tau}}\sum_{t=1}^{t=T} V_{\bm{\Tilde{w}},\bm{v^t}}\\
        &= (1+\frac{1}{e^\tau-1}){\ln n}\\
        &+ (1+\underset{t}{\max}{\alpha_t})\frac{1+e^{\tau}}{2}\sum_{t=1}^{t=T} V_{\bm{\Tilde{w}},\bm{v^t}}\\
        &= \left(1 + (1+\underset{t}{\max}{\alpha_t})\cdot \sqrt{\frac{T}{\ln n}}\right) \ln n\\
        &+ (1+\underset{t}{\max}{\alpha_t})\\
        &\cdot\left(1+\frac{1}{2(1+\underset{t}{\max}{\alpha_t})}\sqrt{\frac{\ln n}{T}}\right)\sum_{t=1}^{t=T} V_{\bm{\Tilde{w}},\bm{v^t}} \\
    \end{aligned}
\end{equation}

Dividing both sides by $T$, we obtain the upper bound of $\frac{1}{T}\sum_{t=1}^{t=T}V_{\bm{w^t},\bm{v^t}}$.
\begin{equation}
    \begin{aligned}
        \frac{1}{T}\sum_{t=1}^{t=T}V_{\bm{w^t},\bm{v^t}} & \leq \frac{\ln n}{T} 
        + (1+\max\limits_{t} \alpha_t)\cdot \sqrt{\frac{\ln n}{T}}\\ 
        &+ \left(\frac{1+\max\limits_{t} \alpha_t}{T}+\frac{1}{2}\sqrt{\frac{\ln n}{T^3}}\right)\sum_{t=1}^{t=T} V_{\bm{\Tilde{w}},\bm{v^t}}\\
        & \leq \left(\frac{1+\max\limits_{t} \alpha_t}{T}+\frac{1}{2}\sqrt{\frac{\ln n}{T^3}}\right)\\
        &\cdot\min_{\bm{\Tilde{w}}\in \Delta_n^c}\sum_{t=1}^{t=T} V_{\bm{\Tilde{w}},\bm{v^t}}\\
        &+ \frac{\ln n}{T} 
        + (1+\max\limits_{t} \alpha_t)\cdot \sqrt{\frac{\ln n}{T}} \\
        & \approx \left(\frac{1}{T}+\frac{1}{2}\sqrt{\frac{\ln n}{T^3}}\right)\min_{\bm{\Tilde{w}}\in \Delta_n^c}\sum_{t=1}^{t=T} V_{\bm{\Tilde{w}},\bm{v^t}}\\
        &+ \frac{\ln n}{T} + (1+\underset{t}{\max}{\alpha_t})\cdot \sqrt{\frac{\ln n}{T}} 
    \end{aligned}
\end{equation}

$\frac{1}{T}\sum_{t=1}^{t=T}V_{\bm{w^t},\bm{v^t}}$ represents the average performance of our strategy while $\min_{\bm{\Tilde{w}}\in \Delta_n^c}\frac{1}{T}\sum_{t=1}^{t=T} V_{\bm{\Tilde{w}},\bm{v^t}}$ represents the best performance we can achieve by any fixed $\bm{\Tilde{w}}$ that is in the constrained simplex . The above inequality guarantees that the average loss of our strategy compared to this optimal $\bm{\Tilde{w}}$ is approximately bounded by $ \frac{\ln n}{T} + (1+\underset{t}{\max}{\alpha_t})\cdot \sqrt{\frac{\ln n}{T}} $, which can be minimized when $T$ gets larger. 
\end{proof}

\section{Lemma for the Loss at the Last Iteration}
\label{appendix:lemma}

We show that if the weights $\bm{w}^t$ generated in \Cref{alg:theoretical} converge, then the one-step value $V_{\bm{w^T},\bm{v^T}}$ also converges to the average value of the best weight.
This result justifies only considering the last weight given by \Cref{alg:theoretical}.

\begin{lemma}
    \label{lemma:average loss}
    If $\forall \epsilon > 0, \exists T_0, \forall t_1, t_2 > T_0, \|\bm{w^{t_1}} - \bm{w^{t_2}}\| < \epsilon$, then the loss of the min player's strategy converges to the average loss of an optimal fixed strategy:
    \begin{equation}    
        \lim_{T\rightarrow \infty}V_{\bm{w^T},\bm{v^T}} = \lim_{T\rightarrow\infty}\frac{1}{T}\min_{\bm{\Tilde{w}}\in \Delta_n^c}\sum_{t=1}^{t=T} V_{\bm{\Tilde{w}},\bm{v^t}}
            % + \frac{\ln n}{T} + (1 + \max_t \alpha_t)\cdot \sqrt{\frac{\ln n}{T}}
    \end{equation}
\end{lemma}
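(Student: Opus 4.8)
The plan is to show that, under the Cauchy hypothesis, both sides of the claimed identity converge to the same scalar $V_{\bm{w^*},\bm{v^*}}$, where $\bm{w^*}$ is the limit of the weight sequence and $\bm{v^*}$ the corresponding limiting accuracy vector. First I would invoke completeness of $\mathbb R^n$: the hypothesis states precisely that $(\bm{w^t})_t$ is Cauchy, hence it converges to some $\bm{w^*}$, and since $\Delta_n^c$ is closed, $\bm{w^*}\in\Delta_n^c$. Next I would put the right-hand side in a tractable form. Writing $\bar{\bm v}^T=\frac1T\sum_{t=1}^{T}\bm{v^t}$ and using that $V_{\bm w,\bm v}=\sum_i w_i v_i$ is linear in $\bm v$,
\[
\frac1T\min_{\bm{\Tilde{w}}\in\Delta_n^c}\sum_{t=1}^{T} V_{\bm{\Tilde{w}},\bm{v^t}}=\min_{\bm{\Tilde{w}}\in\Delta_n^c} V_{\bm{\Tilde{w}},\bar{\bm v}^T}=f(\bar{\bm v}^T),
\]
so the right-hand side is just $f$ evaluated at the running average of the accuracy vectors.

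The main obstacle is that the hypothesis controls only $\bm{w^t}$, while the statement presupposes that the one-step value $V_{\bm{w^T},\bm{v^T}}$ itself converges; a Cesàro mean can converge while the underlying sequence oscillates, so I must first pin down the accuracy sequence. I would close this gap through the max player's step in \Cref{alg:theoretical}: $\bm{v^t}$ is the accuracy vector of the best response $\theta^t=\arg\max_{\theta} V_{\bm{w^t},\bm v(\theta)}$, and assuming this best-response accuracy map is continuous (single-valued), $\bm{w^t}\to\bm{w^*}$ forces $\bm{v^t}\to\bm{v^*}$. Granting this, the one-step value converges, $V_{\bm{w^t},\bm{v^t}}\to V_{\bm{w^*},\bm{v^*}}$, so its Cesàro mean $\frac1T\sum_{t=1}^{T} V_{\bm{w^t},\bm{v^t}}$ converges to the same limit; likewise $\bar{\bm v}^T\to\bm{v^*}$, and since $f$ is continuous (a minimum of affine functions over the compact set $\Delta_n^c$), $f(\bar{\bm v}^T)\to f(\bm{v^*})$.

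Finally I would combine these limits with \Cref{theorem:average loss}. Because $\alpha_t\in[0,1]$ (a ratio of a subsum to the full sum of nonnegative terms), the residual $\frac{\ln n}{T}+(1+\max_t\alpha_t)\sqrt{\ln n/T}$ vanishes, so the theorem gives $\limsup_T \frac1T\sum_{t=1}^{T} V_{\bm{w^t},\bm{v^t}}\le \lim_T f(\bar{\bm v}^T)$, i.e.\ $V_{\bm{w^*},\bm{v^*}}\le f(\bm{v^*})$. The reverse inequality is immediate since $\bm{w^*}\in\Delta_n^c$ yields $V_{\bm{w^*},\bm{v^*}}\ge \min_{\bm{\Tilde{w}}\in\Delta_n^c} V_{\bm{\Tilde{w}},\bm{v^*}}=f(\bm{v^*})$. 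Hence $V_{\bm{w^*},\bm{v^*}}=f(\bm{v^*})$, and chaining the convergences gives $\lim_T V_{\bm{w^T},\bm{v^T}}=V_{\bm{w^*},\bm{v^*}}=f(\bm{v^*})=\lim_T \frac1T\min_{\bm{\Tilde{w}}\in\Delta_n^c}\sum_{t=1}^{T} V_{\bm{\Tilde{w}},\bm{v^t}}$, which is the claim. The single delicate point to make fully rigorous is the continuity of the best-response accuracy map securing $\bm{v^t}\to\bm{v^*}$; everything after it is just the regret bound together with elementary Cesàro and compactness facts.
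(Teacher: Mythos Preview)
Your argument is sound and shares the paper's overall skeleton---use the Cauchy hypothesis to control the weights, link the one-step value $V_{\bm{w^T},\bm{v^T}}$ to the running average $\frac{1}{T}\sum_t V_{\bm{w^t},\bm{v^t}}$, invoke \Cref{theorem:average loss} for the $\le$ direction, and obtain the $\ge$ direction from membership in $\Delta_n^c$. The paper, however, carries this out by a direct manipulation of the Ces\`aro sum: it splits $\frac{1}{T}\sum_t \bm{w^t}\cdot\bm{v^t}$ at the Cauchy index $T_0$, writes $\bm{w^t}\cdot\bm{v^t}=\bm{w^T}\cdot\bm{v^t}+(\bm{w^t}-\bm{w^T})\cdot\bm{v^t}$ for $t>T_0$, and passes to the limit termwise, then asserts the lower bound in one line. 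Your route differs in two respects that are arguably cleaner. First, you pass through the explicit limit points $\bm{w^*},\bm{v^*}$ and the rewriting $\frac{1}{T}\min_{\tilde{\bm w}}\sum_t V_{\tilde{\bm w},\bm{v^t}}=f(\bar{\bm v}^T)$, which turns the lower bound into the one-line observation $V_{\bm{w^*},\bm{v^*}}\ge f(\bm{v^*})$ from $\bm{w^*}\in\Delta_n^c$. Second, you explicitly isolate the continuity of the best-response accuracy map needed to secure $\bm{v^t}\to\bm{v^*}$; the paper's proof silently relies on the same fact at the step $\lim_T\frac{1}{T}\sum_{t=T_0}^T\bm{w^T}\cdot\bm{v^t}=\lim_T\bm{w^T}\cdot\bm{v^T}$ (which fails if $\bm{v^t}$ merely oscillates), so your flagging of this as ``the single delicate point'' is a more honest treatment of the same gap.
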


\begin{proof}
\begin{equation}
    \begin{aligned}
        \frac{1}{T}\sum_{t=1}^{t=T}V_{\bm{w^t},\bm{v^t}} &= \frac{1}{T}\sum_{t=1}^{t=T}V_{\bm{w^t},\bm{v^t}}\\
        &= \frac{1}{T}(\sum_{t=1}^{t=T_0}V_{\bm{w^t},\bm{v^t}} + \sum_{t=T_0}^{t=T}V_{\bm{w^t},\bm{v^t}})\\
        &= \frac{1}{T}(\sum_{t=1}^{t=T_0}V_{\bm{w^t},\bm{v^t}} + \sum_{t=T_0}^{t=T}\bm{w^t} \cdot \bm{v^t})\\
        &= \frac{1}{T}(\sum_{t=1}^{t=T_0}V_{\bm{w^t},\bm{v^t}} + \sum_{t=T_0}^{t=T}\bm{w^T} \cdot \bm{v^t} + \sum_{t=T_0}^{t=T}(\bm{w^t}-\bm{w^T}) \cdot \bm{v^t})\\
        \lim_{T\rightarrow \infty} \frac{1}{T}\sum_{t=1}^{t=T}V_{\bm{w^t},\bm{v^t}} &= \lim_{T\rightarrow \infty}\frac{1}{T}(\sum_{t=1}^{t=T_0}V_{\bm{w^t},\bm{v^t}} + \sum_{t=T_0}^{t=T}\bm{w^T} \cdot \bm{v^t} + \sum_{t=T_0}^{t=T}(\bm{w^t}-\bm{w^T}) \cdot \bm{v^t})\\
        &= \lim_{T\rightarrow \infty}\frac{1}{T}\sum_{t=T_0}^{t=T}\bm{w^T} \cdot \bm{v^t} + \lim_{T\rightarrow \infty}\frac{1}{T}(\sum_{t=1}^{t=T_0}V_{\bm{w^t},\bm{v^t}} + \sum_{t=T_0}^{t=T}(\bm{w^t}-\bm{w^T}) \cdot \bm{v^t})\\
        &= \lim_{T\rightarrow \infty}\bm{w^T} \cdot \bm{v^T} + \lim_{T\rightarrow \infty}\frac{1}{T}(\sum_{t=1}^{t=T_0}V_{\bm{w^t},\bm{v^t}} + \sum_{t=T_0}^{t=T}(\bm{w^t}-\bm{w^T}) \cdot \bm{v^t})\\
        &= \lim_{T\rightarrow \infty}\bm{w^T} \cdot \bm{v^T} + 0 = \lim_{T\rightarrow \infty}\bm{w^T} \cdot \bm{v^T} = \lim_{T\rightarrow \infty}V_{\bm{w^T},\bm{v^T}}\\
    \end{aligned}
\end{equation}
Therefore, as $T$ approaches $\infty$, the loss at the last epoch $\bm{w^T} \cdot \bm{v^T}$ is equal to the average loss $\frac{1}{T}\sum_{t=1}^{t=T}V_{\bm{w^t},\bm{v^t}}$.
Then using the upper bound derived in \Cref{theorem:average loss}, we can conclude that the final loss $\bm{w^T} \cdot \bm{v^T}$ also has an upper bound as $T$ approaches $\infty$.
\begin{equation}
    \label{eq:lemma leq}
    \begin{aligned}    
        \lim_{T\rightarrow \infty}V_{\bm{w^T},\bm{v^T}} &= \lim_{T\rightarrow \infty} \frac{1}{T}\sum_{t=1}^{t=T}V_{\bm{w^t},\bm{v^t}}\\
        &\leq \lim_{T\rightarrow\infty}\frac{1}{T}\min_{\bm{\Tilde{w}}\in \Delta_n^c}\sum_{t=1}^{t=T} V_{\bm{\Tilde{w}},\bm{v^t}} + \frac{\ln n}{T} + (1 + \max_t \alpha_t)\cdot \sqrt{\frac{\ln n}{T}}\\
        &= \lim_{T\rightarrow\infty}\frac{1}{T}\min_{\bm{\Tilde{w}}\in \Delta_n^c}\sum_{t=1}^{t=T} V_{\bm{\Tilde{w}},\bm{v^t}}\\
    \end{aligned}
\end{equation}

Since $\lim_{T\rightarrow \infty}\frac{1}{T}\min_{\bm{\Tilde{w}}\in \Delta_n^c}\sum_{t=1}^{t=T} V_{\bm{\Tilde{w}},\bm{v^t}}$ is the lower bound for $\lim_{T\rightarrow \infty}V_{\bm{w^T},\bm{v^T}}$, 
\begin{equation}
    \label{eq:lemma geq}
    \begin{aligned}
        \lim_{T\rightarrow \infty}V_{\bm{w^T},\bm{v^T}} \geq \lim_{T\rightarrow \infty}\frac{1}{T}\min_{\bm{\Tilde{w}}\in \Delta_n^c}\sum_{t=1}^{t=T} V_{\bm{\Tilde{w}},\bm{v^t}}.
    \end{aligned}
\end{equation}

Then from \Cref{eq:lemma leq} and \Cref{eq:lemma geq}, we obtain 
\begin{equation}
    \begin{aligned}
        \lim_{T\rightarrow \infty}V_{\bm{w^T},\bm{v^T}} = \lim_{T\rightarrow\infty}\frac{1}{T}\min_{\bm{\Tilde{w}}\in \Delta_n^c}\sum_{t=1}^{t=T} V_{\bm{\Tilde{w}},\bm{v^t}}.
    \end{aligned}
\end{equation}
\end{proof}

\section{Comparison of our Method and APStar}
\label{appendix: ours apstar cifar100}
\paragraph{Can a fair classification method developed for group fairness be applied to reduce the class-dependent effects of data augmentation?}
% In this paragraph, we analyze the methods focusing on group fairness.
Most fair ML methods do not apply to our problem, but some could potentially be adapted (e.g., APStar \citep{martinez2020minimaxparetofairnessmulti}).
As such, we investigate how it performs here compared to our proposition.
They achieve fairness by adding a vector with ones for the worst groups to the previous weight vector.
Adapting their method involves replacing groups by classes. 
In our setting, some classes inherently exhibit lower accuracy due to the increased difficulty in classifying them, and APstar tends to over-emphasize these difficult classes while neglecting others, impeding convergence during training.
Additionally, as the number of classes grows, there is a greater chance that marginally better classes will be under-weighted and consequently ignored, which further hurts training.
As shown in \Cref{fig:ours apstar cifar100}, our experiments show that the APStar's weight vector has difficulty converging with many classes.
Comparing our method with APStar on the CIFAR-100 dataset, the weight vector and test accuracies of APStar show less stability.
We conclude that our method is more effective than APStar as the number of classes grows.
\begin{figure}[h]
    \centering
    \includegraphics[width=.49\linewidth]{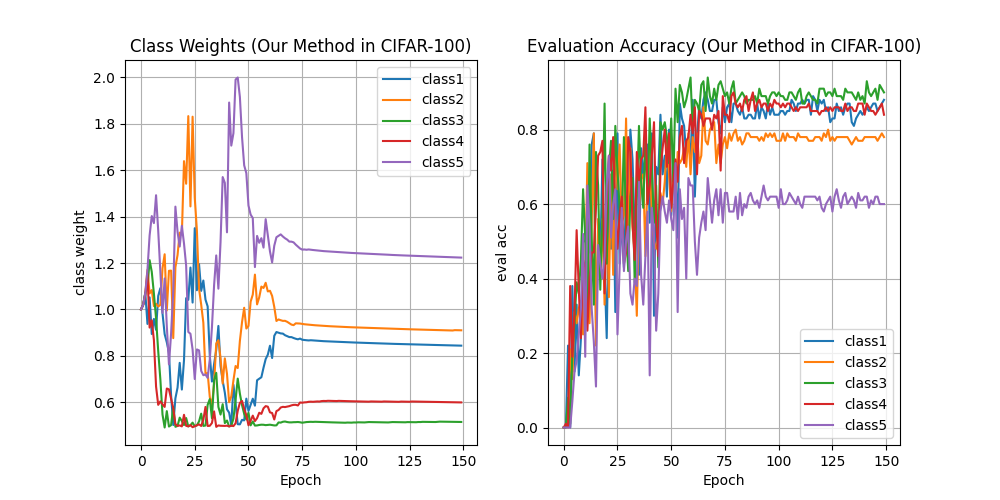}
    \includegraphics[width=.49\linewidth]{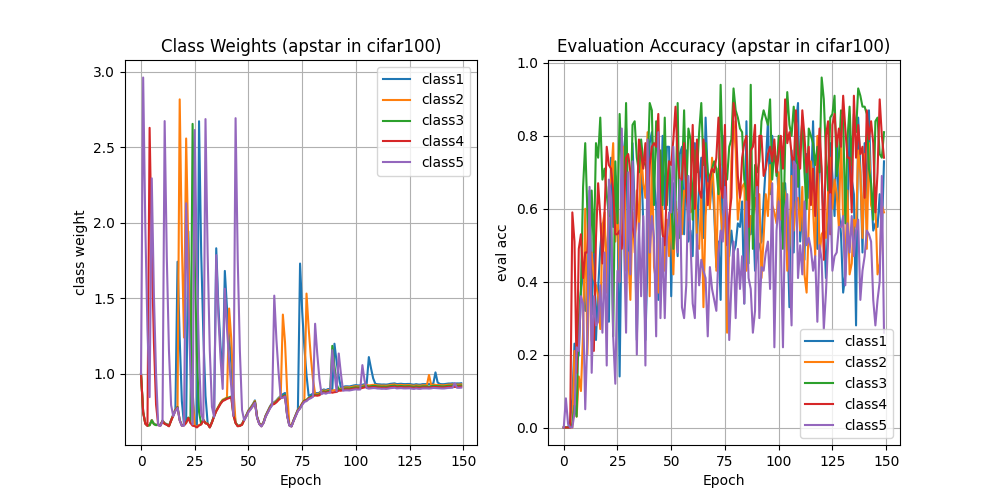}
    \caption{Comparison of class weights and test accuracies between our method and APStar on CIFAR-100 dataset. 
    These results show that the APStar's weight vector has difficulty converging with a large number of classes.}
    \label{fig:ours apstar cifar100}
\end{figure}

\section{Detailed Descriptions and Hyperparameters of Baselines}
\label{appendix: baselines}

\paragraph{cross entropy loss}
The first baseline is normal cross entropy loss.
\begin{equation}
    \label{eq:cross_entropy_loss}
    CE(p_t) = -\ln(p_t)
\end{equation}

\paragraph{focal loss} 
This method adds a factor to the standard cross entropy loss.
\begin{equation}
    \label{eq:focal_loss}
    FL(p_t) = -\ln(p_t) * (1-p_t) ^ \gamma
\end{equation}
where $p_t$ is the model's estimated probability for the class with label $y=1$ and $\gamma>0$ is a hyperparameter.

\paragraph{tilted cross entropy loss} 
This method assigns a weight to different classes.
The weight of each class is obtained from the previous training loss of this class by the following equation. 
\begin{equation}
    \label{eq:tilted_cross_entropy_loss}
    \begin{aligned}
        CE(p_t) &= -\ln(p_t) * w_t\\
        w_t(c) &= (1-\gamma) w_{t-1}(c) + \gamma\frac{e^{L_t(c)}}{\sum_{i=1} ^ n e^{L_t(i)}}
    \end{aligned}
\end{equation}
where $w_t(i)$ is the weight assigned to class $i$ and $L_t(i)$ is the training loss of class $i$. $w_0(i)$ is initialized as $\frac{1}{n}$ for each class $i$.

\paragraph{performance weighted loss (PW Loss)} 
Similar to focal loss, this method adds a factor to the standard cross entropy loss. 
\begin{equation}
    \label{eq:pw_loss}
    FL(p_t) = -\ln(p_t) * ((1-p_t) ^ \gamma + \theta)               
\end{equation}
where $\theta$ is an additional hyperparameter that makes this loss more flexible than focal loss .

\paragraph{GGF enhanced cross entropy loss} 
This approach utilizes the GGF loss at a frequency specified by $f$. 
When $f=1$, the GGF loss is consistently applied. 
For $f=2$, the model alternates between applying the standard loss and the GGF loss. 
The weights assigned to different classes are determined based on the ranking of their training accuracies of the last epoch.
The weight for class $i$ at epoch $t$ can be written as:
\begin{equation}
    \label{eq:ggf_weighted}
    w_t^i = \max[\alpha^{Rank(v^i_{t-1})-1}, w_{min}] 
\end{equation}
where $v_{t-1}^i$ is the training accuracy of class $i$ at epoch $t-1$, $\alpha$ and $w_{min}$ are two hyperparameters.

\begin{table*}[htpb]
    \centering
    \caption{Hyperparameters of Baselines}
    \begin{tabular}{l|lllll}
    \toprule
       Focal Loss & CIFAR-10 & CIFAR-100 & Fashion-Mnist & Mini-Imagenet & Imagenet \\
    \midrule
        $\gamma$ & 2.0 & 2.0 & 2.0 & 2.0 & 2.0 \\
    \midrule
        TCE Loss & CIFAR-10 & CIFAR-100 & Fashion-Mnist & Mini-Imagenet & Imagenet \\
    \midrule
        $\gamma$ & 0.5 & 0.5 & 0.5 & 0.5 & 0.5 \\
    \midrule
       PW Loss & CIFAR-10 & CIFAR-100 & Fashion-Mnist & Mini-Imagenet & Imagenet  \\
    \midrule
        $\gamma$ & 2.5 & 2.5 & 2.5 & 2.5 & 2.5 \\
        $\theta$ & 0.8 & 0.8 & 0.8 &  0.8 & 0.8 \\
    \midrule
        GGF Loss & CIFAR-10 & CIFAR-100 & Fashion-Mnist & Mini-Imagenet & Imagenet  \\
    \midrule
        $\alpha$ & 0.9 & 0.98 & 0.98 & 0.95 & 0.998 \\
        $w_{\min}$ & 0.1 & 0.1 & 0.1 & 0.01 & 0.2 \\
        $f$ & 1 & 2 & 2 & 2 & 1 \\
    \bottomrule
    \end{tabular}
    \label{tab:baselin hyperparameters}
\end{table*}

\section{Preprocessing of Datasets}
\label{appendix: prerpocessing of datasets}
For CIFAR-10, CIFAR-100 and Fashion-Mnist datasets, we use the built-in datasets from the library of $torchvision.datasets$.
For the Mini-Imagenet and Imagenet datasets, we retrieve them from their respective websites and utilize the predefined partitions for training and evaluation.

\section{Hyperparameters of Our Method}
\label{appendix: our hyperparameter}
There are only two hyperparameters in our method, the parameter of multiplicative weight update $\tau$ and the lower bound of class weights $w_{\min}$. 
The simplicity in the number of hyperparameters not only eases the implementation of our method but also reduces the complexity of hyperparameter tuning to achieve optimal performance.
Among all five datasets, we do very little tuning and set $\tau=1$ and $u_{\min}=\frac{1}{2n}$ where $n$ is the number of classes in that specific dataset.

Note that in our theoretical proof \cref{appendix: proof of average loss}, we set $\tau$ to be $\ln(1+\frac{1}{1+\underset{t}{\max}{\alpha_t}}\sqrt{\frac{\ln n}{T}})$.
However, it is important to mention that the robustness of our approach is not solely contingent upon this exact value.
In practice,we have observed that even when $\tau$ is set to 1, a commonly used default value in experiments, our method still demonstrates strong performance.
The consistency of the performance across five datasets underscores our method's resilience and adaptability, indicating that the benefits of our approach are not exclusively tied to the precise tuning of hyperparameters.
Results of our method with different $\tau$ values are provided in \Cref{appendix: CLAM different tau}.

% As a side note, in the loss function we normalize the weights such that the sum equals $n$ rather than $1$, a deliberate implementation choice that does not affect the order of weights.

\section{Mathematical Definition of Evaluation Metrics}
\label{appendix: evaluation metrics}
Mathematically, we define the range $r_t$, standard deviation $\sigma_t$ and the coefficient of variation $c_t$ of class accuracies $v_t$ at epoch $t$ as follows.
\begin{equation}
\begin{aligned}
    r_t &= \max_i v_t^i - \min_i v_t^i\\
    \mu_t &= \sum_{i=1}^n v_t^i\\
    \sigma_t &= \sqrt{\frac{\sum_{i=1}^n (v_t^i - \mu_t)^2}{n-1}} \\
    c_t &= \frac{\sigma_t}{\mu_t}\\
\end{aligned}
\end{equation}

\section{Plots of Worst Class Accuracies}
\label{appendix: worst class accs}
For an in-depth analysis of the performance improvement in the worst class accuracies, detailed plots are shown in \cref{fig:worst class acc cifar10}, \cref{fig:worst class acc fmnist}, \cref{fig:worst class acc cifar100}, \cref{fig:worst class acc mImagenet} and \cref{fig:worst class acc Imagenet} for five datasets.
In each of the five figures, the x-axis represents the $x^{th}$ worst class, while the y-axis denotes the evaluation accuracy of that particular class. 
This arrangement allows for a clear visualization of the effectiveness of our method in improving the test accuracy of these worst classes within each dataset. 
By examining the figures, we can observe how our approach consistently outperforms the other baseline methods across different datasets and almost all worst classes, highlighting its robustness and effectiveness in improving the worst class performance.

\paragraph{Analysis of Worst Class Performance}
Specifically, we investigated whether CLAM identifies the same worst class as other methods and examined the training and testing errors for these classes. 
We observe that in CIFAR-10 all methods identify "cat" and "dog" as the worst classes while in CIFAR-100 "boy", "girl", "man", "woman", "seal", and "otter" are identified as the worst classes. 
Notably, while the training accuracy for these classes is always $100\%$, indicating perfect fitting to the training data, CLAM improves the testing accuracy for these worst classes, thereby reducing the gap between training and testing performance and highlighting CLAM’s effectiveness in reducing class-specific generalization gap.

\begin{figure}
    \begin{minipage}{0.49\textwidth}
        \centering
        \includegraphics[width=0.99\linewidth]{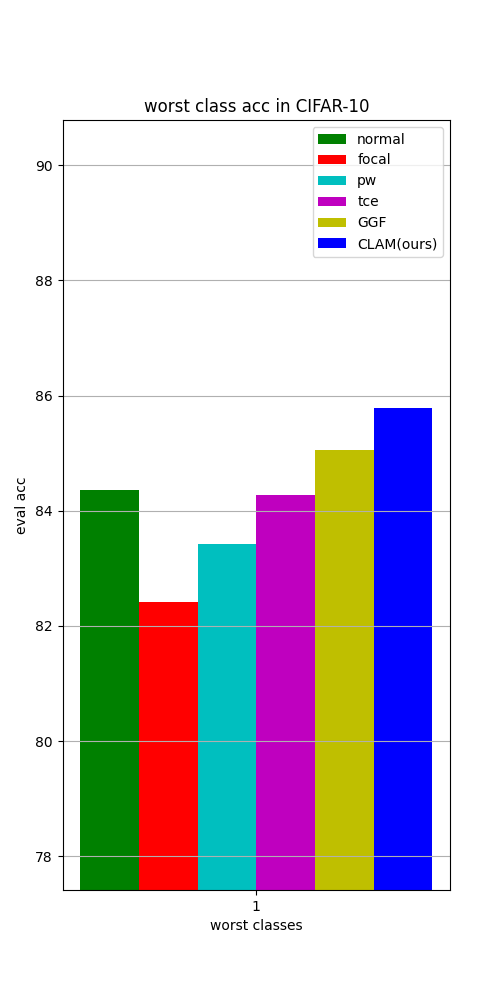}
        \caption{Worst class accuracies of different methods in CIFAR-10 dataset.}
        \label{fig:worst class acc cifar10}
    \end{minipage}
    \begin{minipage}{0.49\textwidth}
        \centering
        \includegraphics[width=0.99\linewidth]{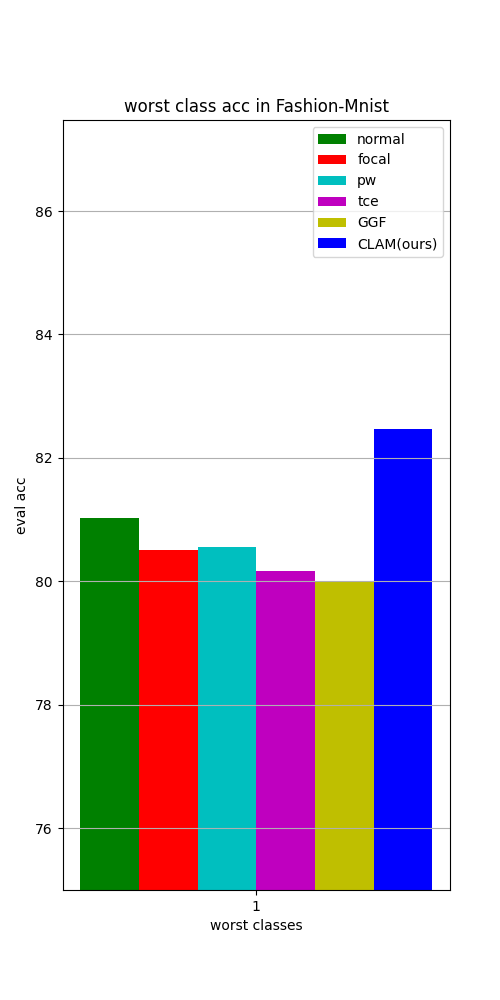}
        \caption{Worst class accuracies of different methods in Fashion-Mnist dataset.}
        \label{fig:worst class acc fmnist}
    \end{minipage}
\end{figure}

\begin{figure}
    \begin{minipage}{0.99\textwidth}
        \centering
        \includegraphics[width=0.99\linewidth]{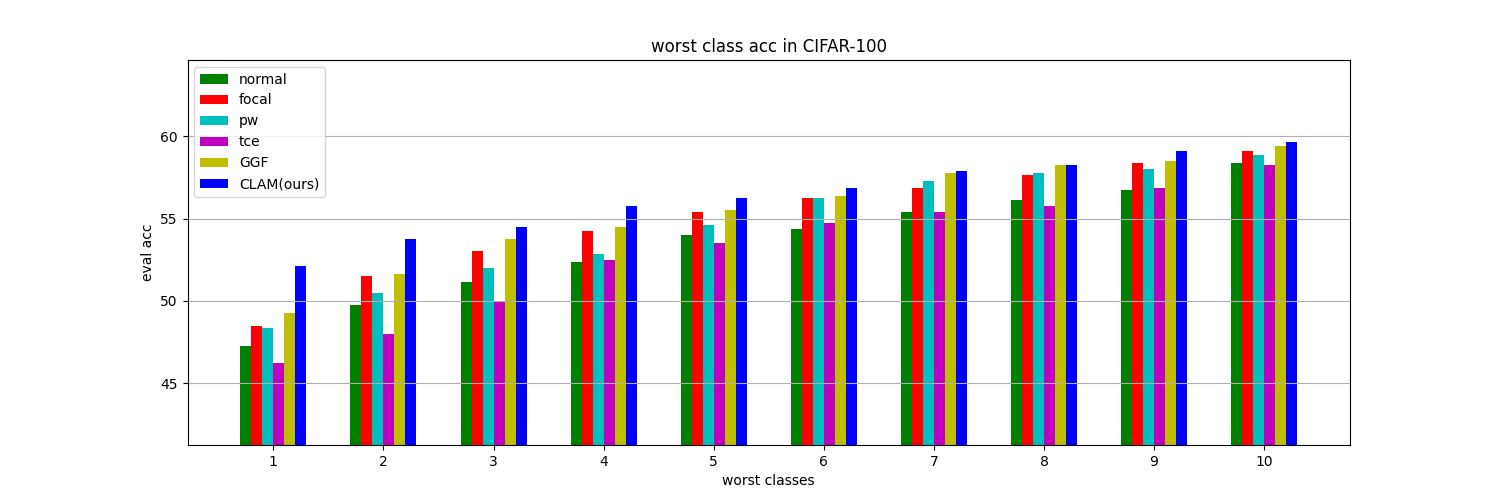}
        \caption{Worst class accuracies of different methods in CIFAR-100 dataset.}
        \label{fig:worst class acc cifar100}
    \end{minipage}
    \begin{minipage}{0.99\textwidth}
        \centering
        \includegraphics[width=0.99\linewidth]{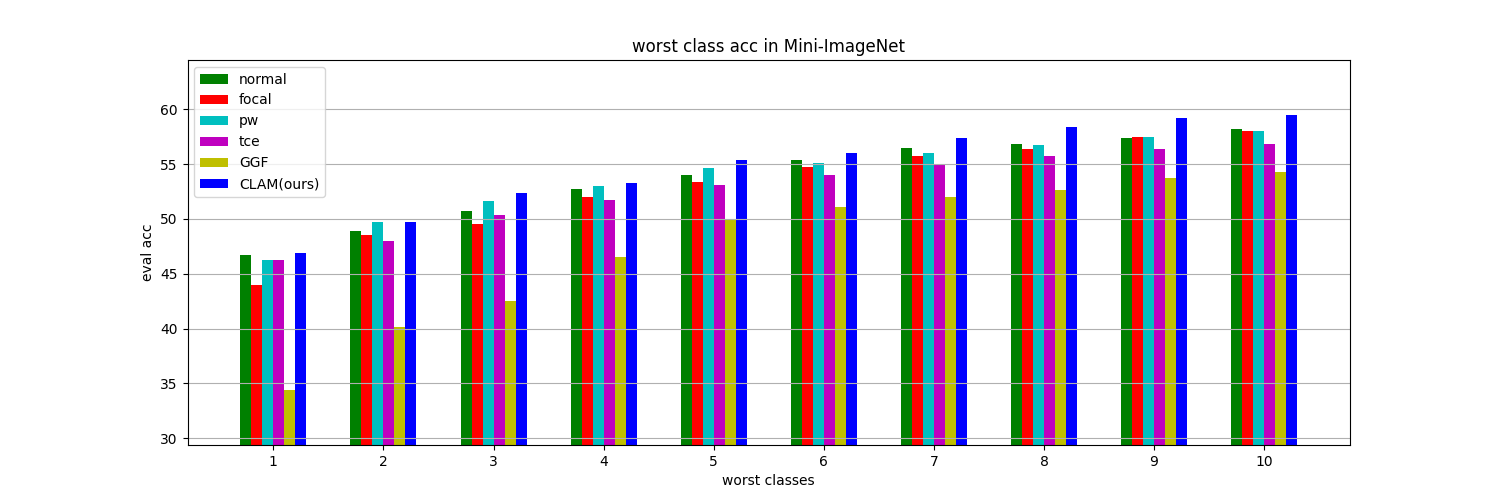}
        \caption{Worst class accuracies of different methods in Mini-Imagenet dataset.}
        \label{fig:worst class acc mImagenet}
    \end{minipage}
    \begin{minipage}{0.99\textwidth}
        \centering
        \includegraphics[width=0.99\linewidth]{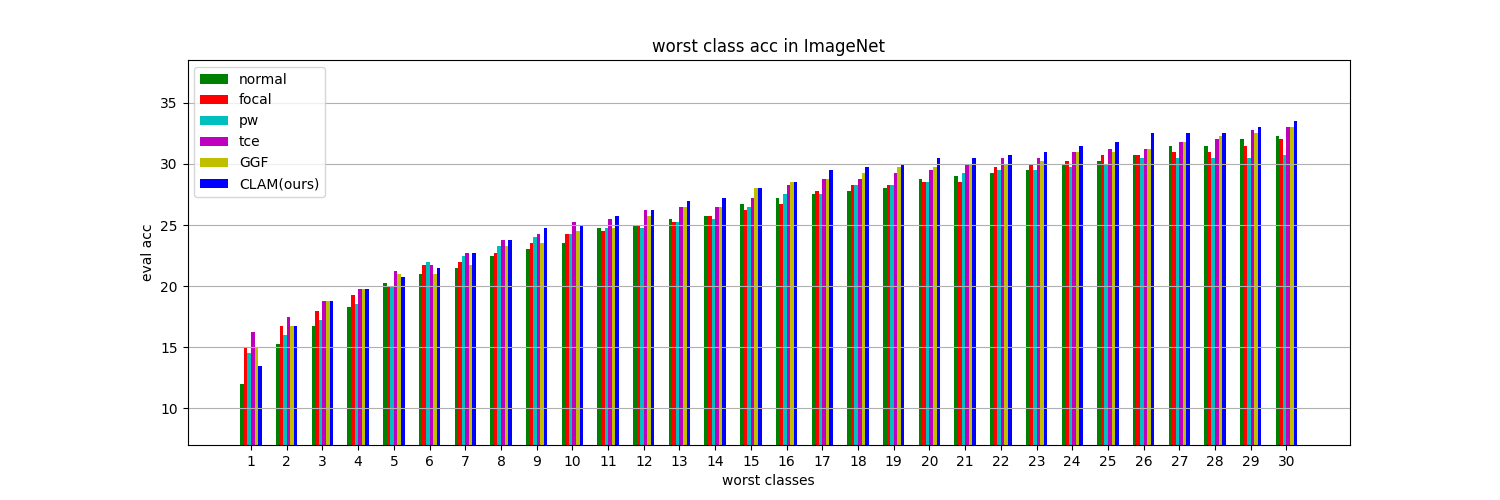}
        \caption{Worst class accuracies of different methods in ImageNet dataset.}
        \label{fig:worst class acc Imagenet}    
    \end{minipage}
\end{figure}

\section{Plots of Fairness Metrics during Training}
\label{appendix: fairness_metric_during_training}
As shown in \cref{fig:fairness cifar10}, \cref{fig:fairness cifar100}, \cref{fig:fairness fmnist}, \cref{fig:fairness mImagenet} and \cref{fig:fairness Imagenet}, we include plots of fairness metrics during training in CIFAR-10, CIFAR-100, Fashion-Mnist, Mini-ImageNet and ImageNet datasets. 
From the plots, we can observe that our method consistently outperform other methods with the lowest standard deviation of class accuracies and the lowest coefficient of variation (COV).

\begin{figure}
    \begin{minipage}{0.49\textwidth}
        \centering
        \includegraphics[width=0.99\linewidth]{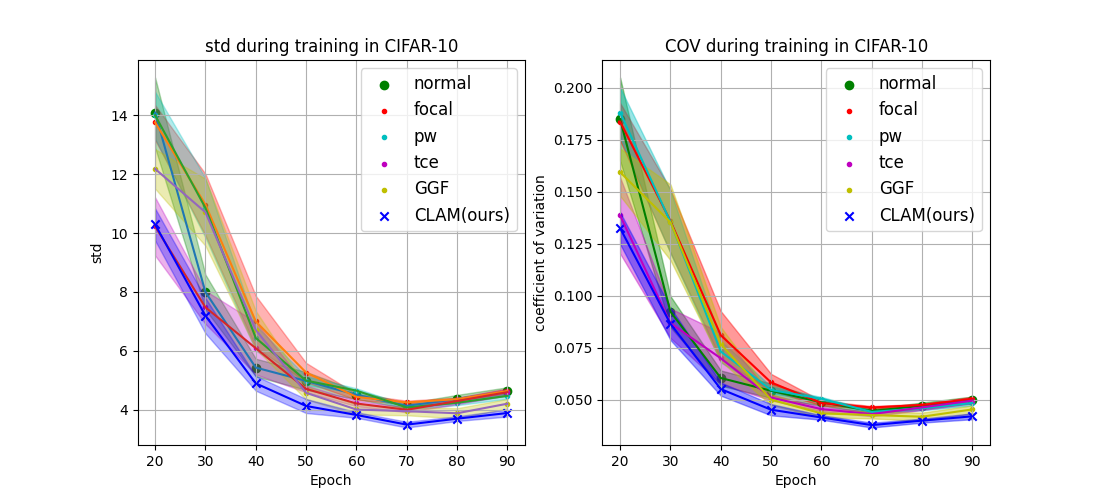}
        \caption{Fairness metrics during training in CIFAR-10 dataset.}
        \label{fig:fairness cifar10}
    \end{minipage}
    \begin{minipage}{0.49\textwidth}
        \centering
        \includegraphics[width=0.9\textwidth]{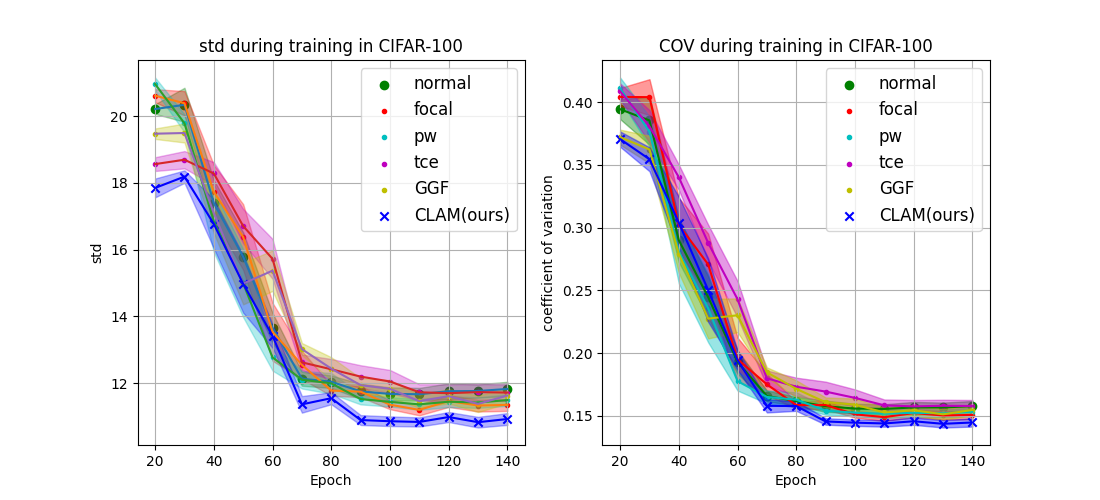}
        \caption{Fairness metrics during training in CIFAR-100 dataset.}
        \label{fig:fairness cifar100}
    \end{minipage}
    \begin{minipage}{0.49\textwidth}
        \centering
        \includegraphics[width=0.9\textwidth]{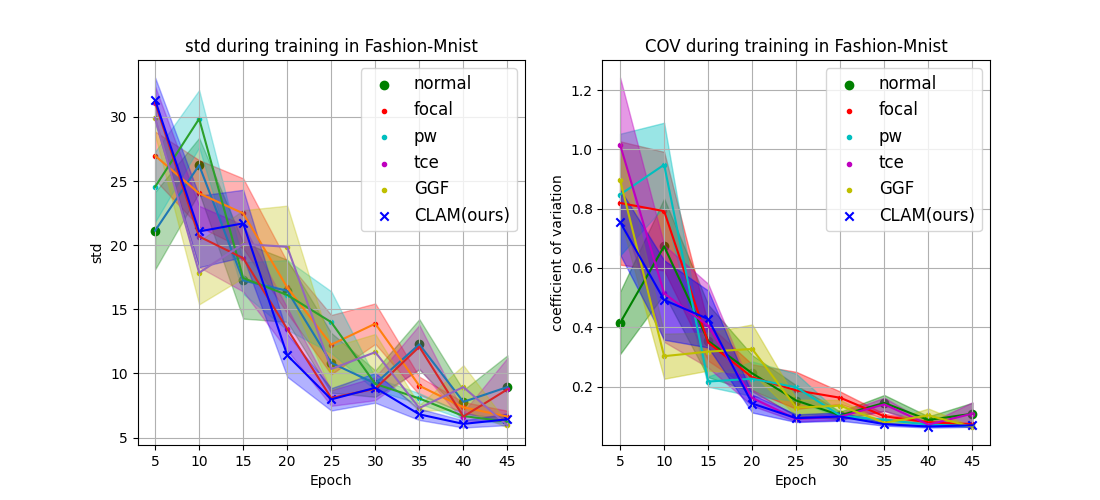}
        \caption{Fairness metrics during training in Fashion-Mnist dataset.}
        \label{fig:fairness fmnist}
    \end{minipage}
    \begin{minipage}{0.49\textwidth}
        \centering
        \includegraphics[width=0.9\textwidth]{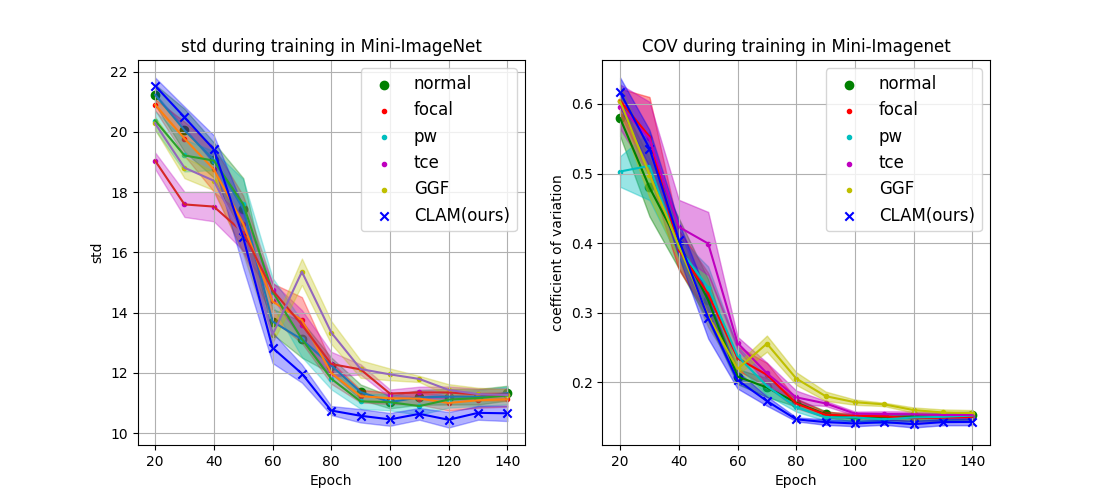}
        \caption{Fairness metrics during training in Mini-ImageNet dataset.}
        \label{fig:fairness mImagenet}
    \end{minipage}
    \begin{minipage}{0.99\textwidth}
        \centering
        \includegraphics[width=0.9\textwidth]{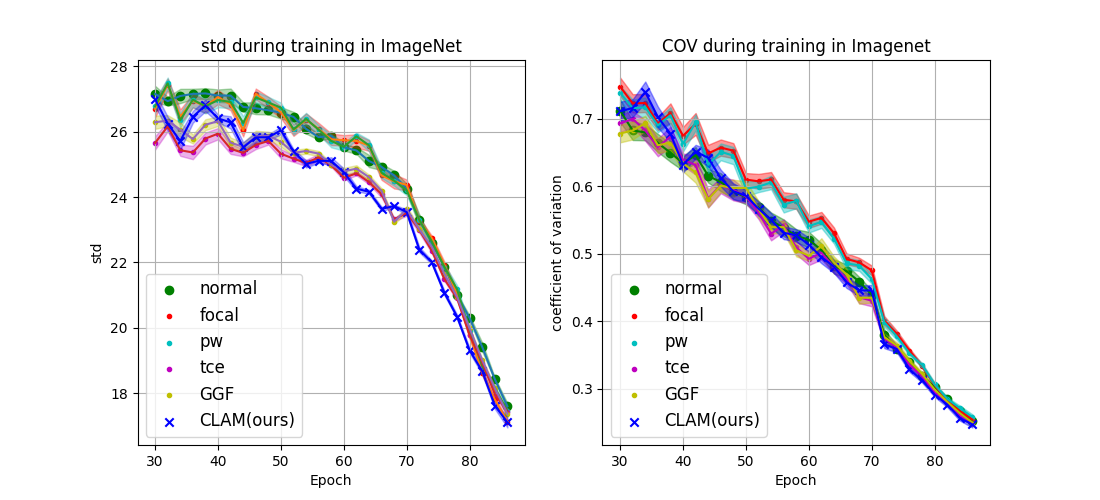}
        \caption{Fairness metrics during training in ImageNet dataset.}
        \label{fig:fairness Imagenet}
    \end{minipage}
\end{figure}

\section{Plots of Class Weights Evolution during Training}
\label{appendix: plots class weights}
Here we present the plots depicting the evolution of class weights during training for the CIFAR-100, Fashion-Mnist and Mini-Imagenet datasets in \Cref{fig:class weights cifar100 2x2}, \Cref{fig:class weights fmnist 2x2} and \Cref{fig:class weights mImagenet 2x2}.
From these figures, we can discern trends similar to those observed in \Cref{fig:class weights cifar10 2x2} for CIFAR-10 dataset.
Initially, the order of weights fluctuates during training but eventually converges at a certain point, after which the evaluation accuracies also converge.
Furthermore, the order of the converged weights is inversely related to the order of the converged evaluation accuracies.

\begin{figure}[H]
    \centering
    \includegraphics[width=0.9\textwidth]{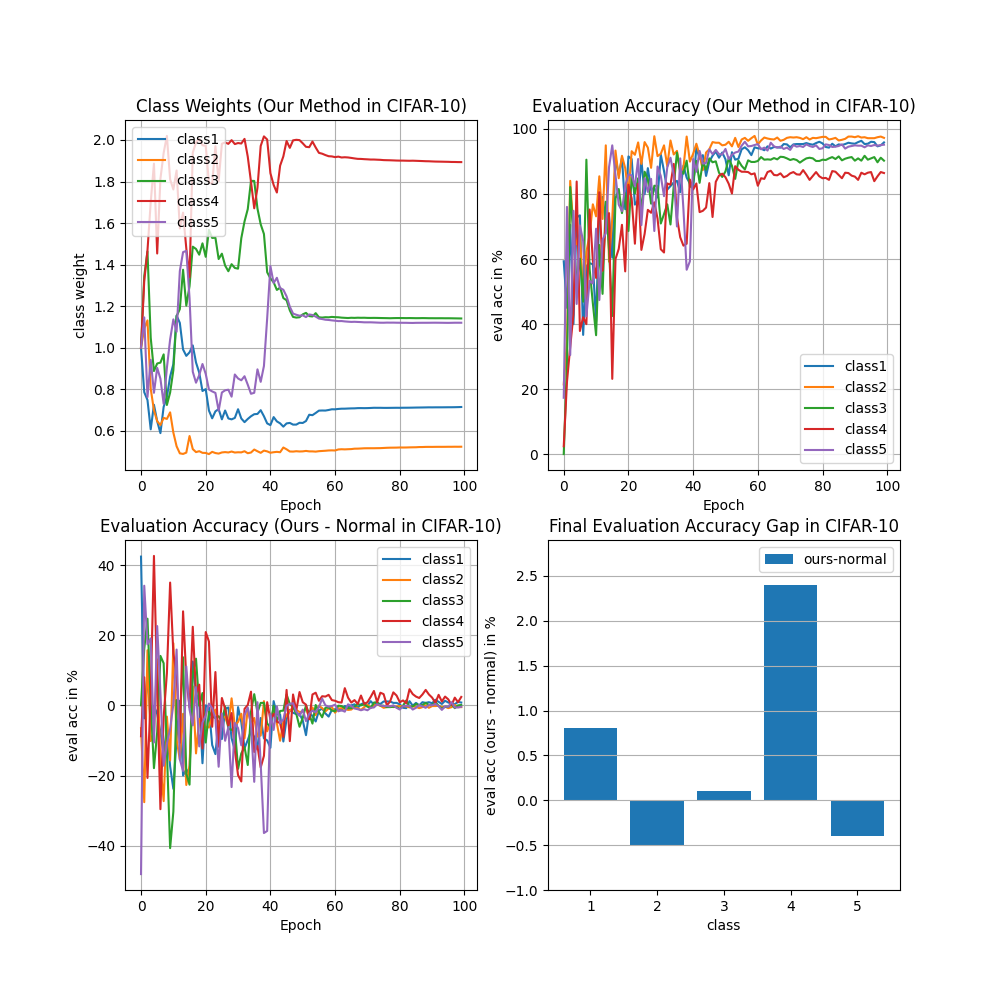}
    \caption{Evolution of class weights in CIFAR-10 dataset. Top Left: class weights; Top Right: evaluation accuracies of certain classes; Bottom Left: evaluation accuracies of certain classes (ours - baseline); Bottom Right: Final evaluation accuracy gap of certain classes (ours - baseline).}
    \label{fig:class weights cifar10 2x2}
\end{figure}

\begin{figure}[H]
    \centering
    \includegraphics[width=0.9\textwidth]{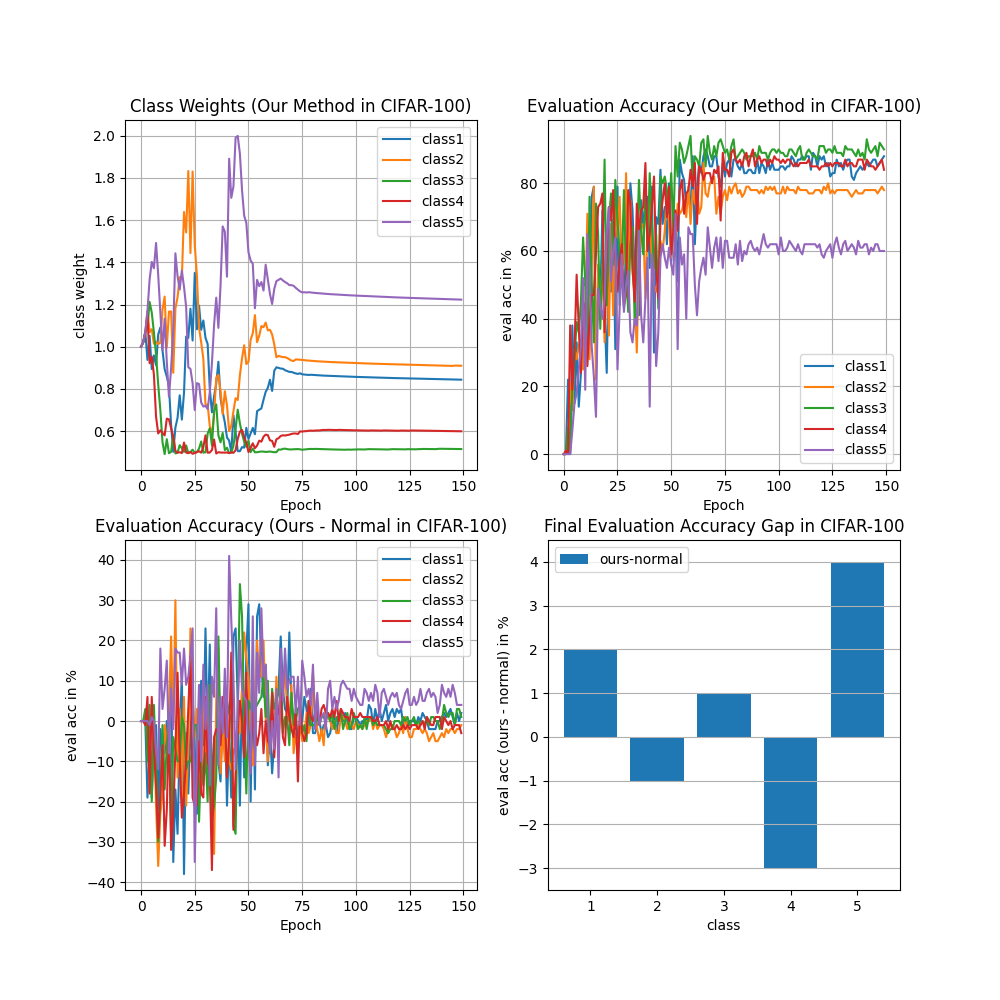}
    \caption{Evolution of class weights in CIFAR-100 dataset. Top Left: class weights; Top Right: evaluation accuracies of certain classes; Bottom Left: evaluation accuracies of certain classes (ours - baseline); Bottom Right: Final evaluation accuracy gap of certain classes (ours - baseline).}
    \label{fig:class weights cifar100 2x2}
\end{figure}

\begin{figure}[H]
    \centering
    \includegraphics[width=0.9\textwidth]{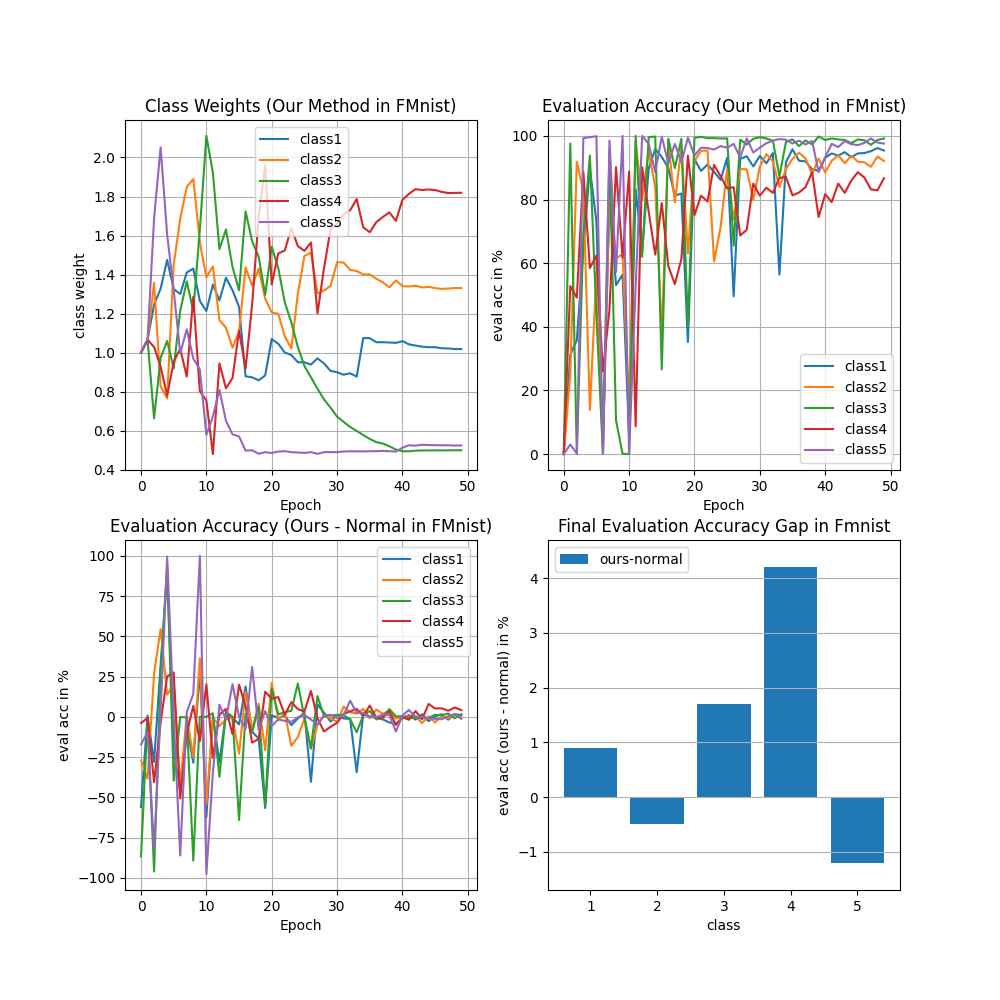}
    \caption{Evolution of class weights in Fashion-Mnist dataset. Top Left: class weights; Top Right: evaluation accuracies of certain classes; Bottom Left: evaluation accuracies of certain classes (ours - baseline); Bottom Right: Final evaluation accuracy gap of certain classes (ours - baseline).}
    \label{fig:class weights fmnist 2x2}
\end{figure}

\begin{figure}[H]
    \centering
    \includegraphics[width=0.9\textwidth]{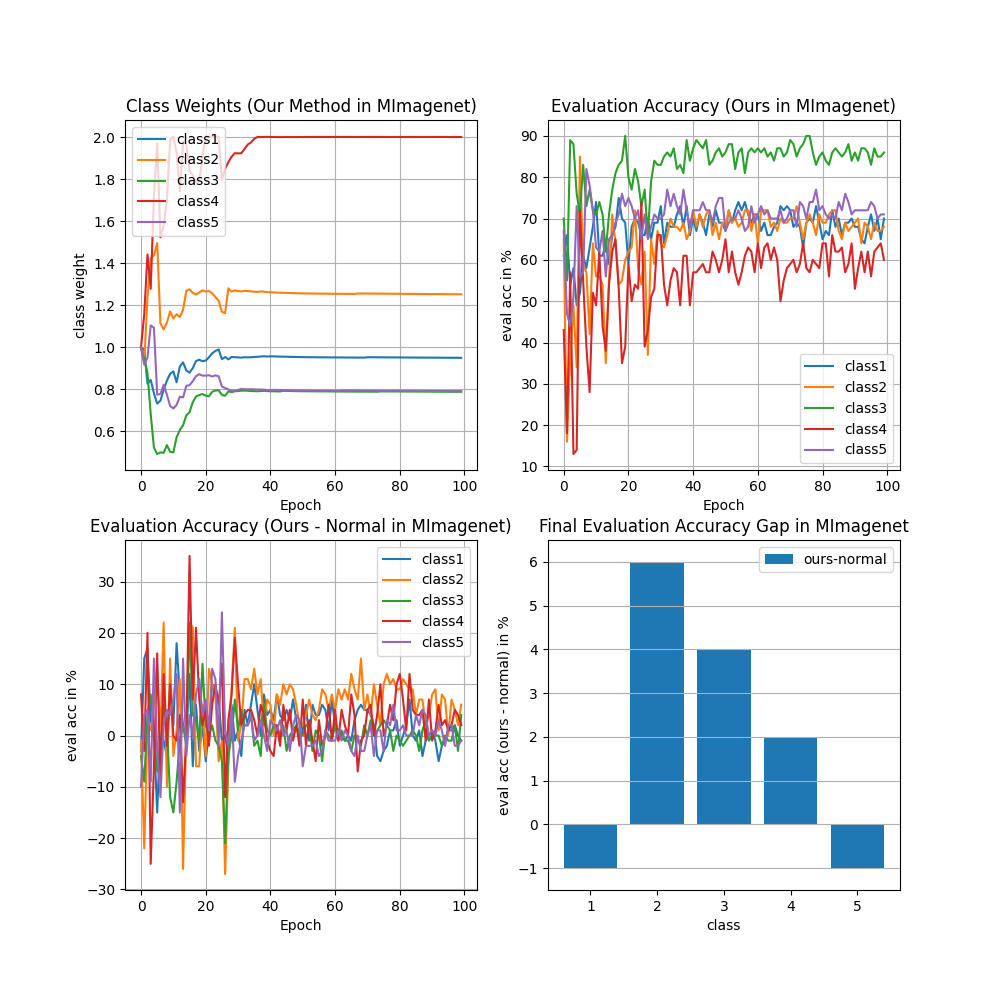}
    \caption{Evolution of class weights in Mini-Imagenet dataset. Top Left: class weights; Top Right: evaluation accuracies of certain classes; Bottom Left: evaluation accuracies of certain classes (ours - baseline); Bottom Right: Final evaluation accuracy gap of certain classes (ours - baseline).}
    \label{fig:class weights mImagenet 2x2}
\end{figure}

\section{Additional Results for CLAM with Different Values of $\tau$}
\label{appendix: CLAM different tau}
We present additional results for CLAM on CIFAR-100 and Fashion-Mnist with varying $\tau$ values in \Cref{tab:fairness metrics tau}, including the baseline ($\tau=0$) for reference.
Across both datasets, CLAM consistently improves class equity in terms of "std", "COV", "range", and "worst acc".
Notably, optimal performance is achieved at $\tau = 0.1$ or $\tau = 0.5$, aligning with theoretical values derived in \Cref{theorem:average loss}.
For CIFAR-100 (100 classes, 150 training epochs), the theoretically optimal $\tau \approx 0.15$, and for Fashion-Mnist (10 classes, 50 training epochs), the theoretically optimal $\tau \approx 0.18$.
The approximations come from assuming $\underset{t}{\max}{\alpha_t} \approx 0.1$ in both cases.

\begin{table}[t]
    \small
    \centering
    \setlength\tabcolsep{2pt}
    \caption{We present the fairness metrics of CLAM with different values of $\tau$ on CIFAR-100 dataset in the following table. 
    In the columns, "std", "COV", "range", 'avg acc', and 'worst acc' respectively denote the standard deviation, coefficient of variation, range of class accuracies, average accuracy of all classes, and average accuracy of worst 10\% classes on test datasets.
    For each method used on each dataset, the reported result is averaged over 8 crop lower bounds.
    The std, range, and accuracies are listed in \% while COV is listed in values. 
    \textbf{Lower is better for "std", "range", and "COV". Higher is better for accuracies.}}
    \label{tab:fairness metrics tau}
    \begin{tabular}{@{}c||l|l|l|l|l||l|l|l|l|l|@{}}
    \toprule
        \multicolumn{6}{c}{CIFAR-100} & \multicolumn{5}{c}{Fashion-Mnist}\\
    \midrule
        $\tau$ & std & COV & range & avg acc & worst acc & std & COV & range & avg acc & worst acc\\
    \midrule
        \makecell{0\\ (Normal)} & 11.8 & 0.157 & 39.1 & 75.2 & 53.6 & 5.94 & 0.063 & 18.3 & \textbf{94.1} & 81.0 \\
    \midrule
        0.1 & 10.8 & 0.142 & 35.9 & \textbf{75.8} & 56.0 & \textbf{5.17} & \textbf{0.055} & \textbf{14.8} & \textbf{94.1} & \textbf{84.7} \\
    \midrule
        0.5 & \textbf{10.6} & \textbf{0.140} & \textbf{35.3} & 75.5 & 56.1 & \textbf{5.17} & \textbf{0.055} & 15.1 & \textbf{94.1} & 84.4 \\
    \midrule
        1.0 & 10.8 & 0.143 & 35.9 & 75.7 & \textbf{56.4} & 5.44 & 0.058 & 16.7 & 94.0 & 82.5\\
    \midrule
        2.0 & 11.3 & 0.151 & 37.3 & 75.1 & 54.5 & 5.30 & 0.057 & 15.4 & 93.9 & 83.7 \\
    \midrule
        10.0 & 11.3 & 0.151 & 37.3 & 75.1 & 54.9 & 5.50 & 0.060 & 17.1 & 92.8 & 81.7 \\
    \bottomrule
    \end{tabular}
\end{table}

\section{Additional Results with Color Jitter as Data Augmentation}
\label{appendix: color jitter}
We present additional results on CIFAR-10, CIFAR-100, Fashion-MNIST, and Mini-Imagenet datasets with color jitter as a data augmentation method in \Cref{tab:fairness metrics with color jitter}.
Our method consistently achieves the best class equity across all datasets, demonstrating its robustness to diverse augmentation strategies.

\begin{table}[t]
    \small
    \centering
    \setlength\tabcolsep{2pt}
    \caption{We present the fairness metrics of different methods with color jitter as data augmentation on four datasets in the following table. 
    In the second column of the table, "std", "COV", and "range" respectively denote the standard deviation, coefficient of variation, and range of class accuracies on test datasets. 
    For each method used on each dataset, the reported result is averaged over 8 color jitter parameters.
    The std and range are listed in \% while COV is listed in values. 
    \textbf{Lower is better.}}
    \label{tab:fairness metrics with color jitter}
    \begin{tabular}{@{}clll|lll|lll|lll@{}}
    \toprule
        \multicolumn{4}{c}{CIFAR-10} & \multicolumn{3}{c}{CIFAR-100} & \multicolumn{3}{c}{Fashion-Mnist} & \multicolumn{3}{c}{Mini-ImageNet}\\
    \midrule
        Method & std & COV & range & std & COV & range & std & COV & range & std & COV & range  \\
    \midrule
        \makecell{Normal} & 8.95 & 0.106 & 28.7 & 16.89 & 0.273 & 57.1 & 15.8 & 0.238 & 48.7 & 17.4 & 0.312 & 58.2 \\
    \midrule
        \makecell{Focal} & 8.66 & 0.102 & 26.4 & 15.27 & 0.242 & 51.3 & 15.7 & 0.218 & 46.9 & 17.3 & 0.326 & 58.5 \\
    \midrule
        \makecell{GGF} & 8.88 & 0.104 & 25.9 & 17.72 & 0.308 & 59.0 & 15.9 & 0.225 & 47.4 & 22.0 & 0.492 & 73.5 \\
    \midrule
        \textbf{\makecell{\methodname\\(Ours)}} & \textbf{3.86} & \textbf{0.042} & \textbf{24.3} & \textbf{15.06} & \textbf{0.236} & \textbf{50.1} & \textbf{15.5} & \textbf{0.214} & \textbf{44.7} & \textbf{16.3} & \textbf{0.290} & \textbf{54.1} \\
    \bottomrule
    \end{tabular}
\end{table}

\section{Additional Results on iNaturalist2018}
\label{appendix: iNaturalist2018}
Here we include additional results on iNaturalist2018, a class-imbalanced dataset, as shown in \Cref{tab:fairness metrics iNaturalist2018}.
As shown in the table, "Focal" and "GGF" reduce accuracy standard deviation and range but sacrifice significant average accuracy.
In contrast, our method achieves a lower coefficient of variation, a more advanced and critical metric for class equity, while maintaining better average, worst-class, and best-class accuracies, demonstrating stronger robustness and balance across classes.

\begin{table}[t]
    \small
    \centering
    \setlength\tabcolsep{2pt}
    \caption{We present the fairness metrics of different methods with random resized cropping as data augmentation on iNaturalist2018 in the following table. 
    In the second column of the table, "std", "COV", "range", "avg acc", "worst acc", and "best acc" respectively denote the standard deviation, coefficient of variation, range of class accuracies, average accuracy of all classes, average accuracy of worst 10\% classes, and average accuracy of best 10\% classes on test datasets. 
    For each method used on each dataset, the reported result is averaged over 8 crop lower bounds.
    The std, range, and accuracies are listed in \% while COV is listed in values. 
    \textbf{Lower is better for "std", "range", and "COV". Higher is better for accuracies.}}
    \label{tab:fairness metrics iNaturalist2018}
    \begin{tabular}{@{}c|l|l|l|l|l|l@{}}
    \toprule
        \multicolumn{6}{c}{iNaturalist2018} \\
    \midrule
        Method & std & COV & range & avg acc & worst acc & best acc\\
    \midrule
        \makecell{Normal} & 30.5 & 0.843 & 50.7 & 42.1 & 11.1 & 61.9 \\
    \midrule
        \makecell{Focal} & 29.3 & 0.933 & \textbf{47.9} & 37.9 & 8.3 & 56.2 \\
    \midrule
        \makecell{GGF} & \textbf{29.2} & 0.875 & 48.1 & 40.2 & 9.3 & 57.5 \\
    \midrule
        \textbf{\makecell{\methodname\\(Ours)}} & 30.3 & \textbf{0.813} & 50.2 & \textbf{42.9} & \textbf{12.5} & \textbf{62.7} \\
    \bottomrule
    \end{tabular}
\end{table}

\section{Additional Results on Imagenet with ReaL \citep{beyer2020imagenet, kirichenko2023understandingdetrimentalclassleveleffects}}
\label{appendix: ReaL Imagenet}
Here we include additional results on Imagenet dataset with reassessed labels, as shown in \Cref{tab:fairness metrics ReaL}.
From the table, we can observe that our method continues to improve class equity on ImageNet with reassessed labels.
\begin{table}[t]
    \small
    \centering
    \setlength\tabcolsep{2pt}
    \caption{We present the fairness metrics of different methods on ImageNet with reassessed labels in the following table. 
    In the second column of the table, "std", "COV", "range", "worst acc", and "best acc" respectively denote the standard deviation, coefficient of variation, range of class accuracies, and average accuracy of worst 10\% classes on test datasets. 
    For each method used on each dataset, the reported result is averaged over 8 crop lower bounds.
    The std, range, and accuracies are listed in \% while COV is listed in values. 
    \textbf{Lower is better for "std", "range", and "COV". Higher is better for accuracies.}}
    \label{tab:fairness metrics ReaL}
    \begin{tabular}{@{}c|l|l|l|l@{}}
    \toprule
        \multicolumn{5}{c}{ImageNet with ReaL labels} \\
    \midrule
        Method & std & COV & range & worst acc\\
    \midrule
        \makecell{Normal} & 13.6 & 0.179 & 45.9 & 50.0 \\
    \midrule
        \textbf{\makecell{\methodname\\(Ours)}} & \textbf{13.3} & \textbf{0.176} & \textbf{44.9} & \textbf{50.2} \\
    \bottomrule
    \end{tabular}
\end{table}

\section{Compute Resources We Use}
\label{appendix:compute}
In all our experiments, we utilize a GPU server equipped with 8 cards that have either RTX-4090 or A6000 GPUs and are powered by AMD EPYC 7763 CPUs.
For the CIFAR-10 dataset experiments, training a resnet-50 model for 100 epochs on a single GPU card takes approximately 3 hours.
When working with the CIFAR-100 dataset, a resnet-50 model trained for 150 epochs on one GPU card requires about 4.5 hours.
On the Fashion-MNIST dataset, a resnet-50 model trained for 50 epochs on one GPU card completes in less than 2 hours.
For the Mini-Imagenet dataset, an experiment involving a resnet-50 model trained for 150 epochs on one GPU card consumes around 4.5 hours.
Lastly, for the Imagenet dataset, training a resnet-50 model for 88 epochs on one GPU card demands roughly 2 days of computational time.

% \section{Limitations of Our Work}
% \label{appendix: limitations}
% While our approach has yielded commendable improvements in fairness metrics, it is essential to acknowledge a limitation that emerged from our experiments. 
% We observe a decrease in overall test accuracy across 3 out of the 5 datasets we examined, as discussed in \cref{tab:average accs}. 
% This trade-off between fairness and accuracy presents a limitation of our work, as enhancing fairness may sometimes come at the expense of the overall model performance. 

\section{Limitations}
\label{appendix: limitations}
While our approach has yielded commendable improvements in fairness metrics, it is essential to acknowledge a limitation that emerged from our experiments. 
We observe a \pw{small} decrease in overall test accuracy across 3 out of the 5 datasets we examined, as discussed in \Cref{tab:average accs}. 
This trade-off between fairness and accuracy presents a limitation of our work, as enhancing fairness may sometimes come at the expense of the overall model performance. 

\section{Broader Impacts}
\label{appendix: broader impacts}
\paragraph{Positive societal impacts} 
By proposing and validating a novel method that balances accuracy with fairness, we contribute to the development of a more equitable machine learning system.
This kind of system have the potential to positively affect various domains, such as healthcare and education, where unbiased decisions can lead to more just outcomes for all individuals.

\paragraph{Negative societal impacts}
To the best of our knowledge, we don't see any negative societal impacts of our work.

% \section{Further Answers to Reproducibility Checklist}
% Does this paper make theoretical contributions? \textbf{yes}

% \begin{itemize}
%     \item All experimental code used to eliminate or disprove claims is included. 
%     \textbf{NA. 
%     Our work does not eliminate or disprove any claim.}
% \end{itemize}

% Does this paper rely on one or more datasets? \textbf{yes}
% \begin{itemize}
%     \item All novel datasets introduced in this paper are included in a data appendix. 
%     \textbf{NA. 
%     Our work does not introduce novel datasets.}
%     \item All novel datasets introduced in this paper will be made publicly available upon publication of the paper with a license that allows free usage for research purposes. 
%     \textbf{NA. 
%     Our work does not introduce novel datasets.}
%     \item All datasets that are not publicly available are described in detail, with explanation why publicly available alternatives are not scientifically satisficing. 
%     \textbf{NA. 
%     All datasets used in our experiments are publicly available.}
% \end{itemize}

% Does this paper include computational experiments? \textbf{yes}
% \begin{itemize}
%     \item The significance of any improvement or decrease in performance is judged using appropriate statistical tests (e.g., Wilcoxon signed-rank). 
%     \textbf{No. 
%     We include the measure of variation in our experimental results. 
%     Further statistical tests are unnecessary in our setting.}
% \end{itemize}

\end{appendices}

\end{document}